\newtheorem{theorem}{Theorem}
\newtheorem{lemma}{Lemma}
\newcommand{\ket}[1]{|#1\rangle}
\newcommand{\bra}[1]{\langle #1|}
\newcommand{\innerproduct}[2]{\langle #1| #2\rangle}
\newcommand{\outerproduct}[2]{|#1\rangle\langle #2|}
\newcommand{\dagr}[1]{{#1}^{\dagger}}
\newcommand{\ztzt}{\mathbb{Z}_2 \times \mathbb{Z}_2}
\newcommand{\af}{A_4}
\newcommand{\hgc}{H^2(G,\mathbb{C})}
\newcommand{\beq}{\begin{equation}}
\newcommand{\eeq}{\end{equation}}
\newcommand{\bea}{\begin{eqnarray}}
\newcommand{\eea}{\end{eqnarray}}
\begin{document}

\title{Ground states of 1D symmetry-protected topological phases and their utility as resource states for quantum computation}

\author{Abhishodh Prakash}
\affiliation{C. N. Yang Institute for Theoretical Physics and
Department of Physics and Astronomy, State University of New York at
Stony Brook, Stony Brook, NY 11794-3840, USA}

\author{Tzu-Chieh Wei}
\affiliation{C. N. Yang Institute for Theoretical Physics and
Department of Physics and Astronomy, State University of New York at
Stony Brook, Stony Brook, NY 11794-3840, USA}
\date{\today}

\begin{abstract}
	The program of classifying symmetry protected topological (SPT) phases in 1D has been recently completed and has opened the doors to study closely the properties of systems belonging to these phases. It was recently found that being able to constrain the form of ground states of SPT order based on symmetry properties also  allows to explore novel resource states for processing of quantum information. In this paper, we generalize the consideration of Else et al.  [Phys. Rev. Lett. {\bf 108}, 240505 (2012)] where it was shown that the ground-state form  of spin-1 chains protected by $\ztzt$ symmetry supports perfect operation of the identity gate, important also for long-distance transmission of quantum information. We develop a formalism to constrain the ground-state form of SPT phases protected by any arbitrary finite symmetry group and use it to examine examples of ground states of SPT phases protected by various finite groups for similar gate protections. We construct a particular Hamiltonian invariant under $A_4$ symmetry transformation which is one of the groups that allows protected identity operation and examine its ground states. We find that there is an extended region where the ground state is the AKLT state, which not only supports the identity gate but also arbitrary single-qubit gates.
\end{abstract}
%% example pacs
%\pacs{ to be added
%03.67.Lx, %Quantum computation architecture and implementations
%03.67.Ac,
%64.60.ah,  %Percolation
%75.10.Jm %Quantized spin models,
%}
 \maketitle

 \section{Introduction}
 Symmetry-protected topological (SPT) phases have topological order that is not characterized by a local order parameter and their existence requires symmetry to be preserved~\cite{top1d_wenold, top1d_wennew,top1d_lukasz,top1d_pollmanturner,top1d_schuchgarciacirac}. Ground states of topologically non-trivial SPT phases cannot be continuously connected to trivial product states without either closing the gap or breaking the protected symmetry.  In one dimension, a particularly useful way to describe ground states is the matrix-product-state (MPS) representation~\cite{fannes1992,Perez-Garcia2007,hastings2007area} and this has led to many interesting results including a complete classification of SPT phases~\cite{top1d_wennew}. In addition to classifying SPT phases, an intriguing connection of SPT phases to quantum computation was identified in Ref.~\cite{bartlett2012} that SPT ground states of $\ztzt$ symmetry can serve as resource states for realizing certain gate operations in quantum computation by local measurement. 
 
 Measurement-based quantum computation  (MBQC)~\cite{raussendorf_2001,raussendorf2003,raussendorf_2009} is a quantum computational scheme that makes use of only local measurements on a suitably entangled resource state. It was originally invented with a specific resource state, i.e., the cluster state~\cite{raussendorf_2001} but was subsequently shown to be supported by a variety of systems~\cite{VandenNest2006,gross2007_2,gross2007,XieChen2009}, in particular, the Affleck-Kennedy-Lieb-Tasaki (AKLT) states~\cite{aklt,WeiRobert_Review} on various one- and two-dimensional systems~\cite{miyake_aklt,Wei2011,Miyake2011,wei_aklt2012, wei_aklt2013,Cai_Miyake_2010}. In Ref.~\cite{bartlett2012} it was observed that both the 1D cluster and AKLT states, which are capable of supporting arbitrary single-qubit gates, belong to a 1D SPT phase protected by $\ztzt$ on-site symmetry. Moreover other ground states of this phase also support a protected identity gate operation and can act as perfect wires for transmission of quantum information. The results in Ref.~\cite{bartlett2012} hinge on features of specific Abelian groups, $i.e.$,  groups whose projective representation possesses a maximally non-commutative factor system. This brings forth several interesting questions:
 \begin{enumerate}
 	\item Can we extend the results of Ref~\cite{bartlett2012} to get the ground-state form of SPT phases protected by an arbitrary group (both Abelian and non-Abelian)?
 	\item Are there SPT phases protected by other groups which protect the perfect operation of the identity gate? 
 	\item Are there SPT phases where other non-trivial operations are also allowed? Is it possible to find an entire SPT phase whose ground states support universal one-qubit gates? 
 \end{enumerate}

  Here we develop a formalism that addresses (1) and allows us to treat an arbitrary finite group $G$, either Abelian or non-Abelian, so that we can examine the associated SPT ground states and protected gate operations. The results of Ref.~\cite{bartlett2012} on the spin-1 system with  $\ztzt$  are reproduced in this formulation. To address (2), we find that in addition to $\ztzt$, 1D topologically non-trivial SPT phases associated with the symmetry groups $A_4$ (the alternating group of degree 4) and $S_4$ (the symmetric group of degree 4), see Sec~\ref{sec:examples}, acting on a three-dimensional on-site irreducible representation (i.e., physical spin-1 entities) also protect the identity gate operation. The latter group was also studied in Ref.~\cite{MillerMiyake14}. 
 
 We only make partial progress in answering (3). We consider an example Hamiltonian with $A_4$ and parity invariance and study its ground states in various parameter regimes. This Hamiltonian can be regarded as perturbing the AKLT Hamiltonian. We find an extended region in the parameter space where the ground state is exactly the AKLT state and hence can be used as a resource state capable of universal single-qubit gate operations. Whether or not it is generic that the imposition of an appropriate set of symmetries can allow the entire region of an SPT phase to support protected universal single-qubit gates remains an open question. There has however been progress in reducing certain SPT ground states into resource states that support universal single-qubit operations by a `buffering' technique~\cite{MillerMiyake14}, which in some sense gives an affirmative answer to (3).

The rest of the paper is organized as follows. In Sec~\ref{sec:review}, we review the matrix-product state formalism and its connection to quantum computation, and their utility in SPT phases. In Sec~\ref{sec:mainresults}, we present the key results of our formalism that can determine, in terms of MPS, the structure of SPT ground states constrained by symmetry. The method we used was inspired by Refs.~\cite{vidaltensor,vidaltensor_u1,vidaltensor_su2,vidaltensor_bonddimension} where they consider imposing global symmetries such as $SU(2)$ and $U(1)$ for application in numerical simulations. The formalism we develop here might also find its application in numerical simulations with discrete symmetries imposed~\cite{sukhi_injectivity}. In Sec.~\ref{sec:examples}, we use our formalism to examine SPT phases and their non-trivial ground states protected by symmetries such as $\ztzt$, $D_4$, $A_4$  and $S_4$. In Sec.~\ref{sec:model_hamiltonian}, we construct a specific Hamiltonian that is $A_4$ symmetric by perturbing the AKLT Hamiltonian and study its ground states. We find an extended region where the ground states are identical to the AKLT state, which allows universal single qubit operations. We conclude in Sec.~\ref{sec:summary}.

\section{Review of relevant definitions and results}\label{sec:review}
\subsection{Definition of a gapped phase of matter}
In Ref.~\cite{top1d_wenold}, it was argued that in order to talk about phases of matter, we need to specify the class of Hamiltonians we are considering. Two gapped Hamiltonians from a given class are in the same phase if we can `connect' them smoothly without closing the spectral gap. Otherwise, there is a boundary in the space of Hamiltonians where the gap closes separating different phases of matter~\cite{top1d_wenold,top1d_wennew}. In 1D, if we consider the class of all gapped local Hamiltonians, it has been shown~\cite{top1d_wenold} that they all belong to the same phase and we can connect any two such Hamiltonians without closing the gap by adding suitable local operators. Thus, there is no \emph{intrinsic} topological order in 1D and all Hamiltonians can be connected to those in the trivial phase with product ground states. In other words, any ground state can be connected to a product state. On the other hand, if we restrict ourselves to a class of Hamiltonians that respect some global symmetry, there are generally phase boundaries which arise. We cannot connect Hamiltonians in different phases through symmetry respecting operators without closing the gap. Different phases are characterized by a combination of \emph{symmetry fractionalization} and \emph{symmetry breaking}~\cite{top1d_wennew}. When symmetry is not broken, the unique ground states of these \emph{Symmetry Protected Topological} (SPT) phases respect the symmetry of the Hamiltonian and allow us to write down their form using tools from the representation theory of groups. Much of this is possible by using the matrix-product-state representation of gapped ground states of 1D spin chains which we shall briefly review below.

\subsection{Matrix product states}\label{sec:MPSreview}
We begin by  giving a brief review of the \emph{Matrix Product State} representation of many-body wavefunctions in 1D~\cite{Perez-Garcia2007}. Consider a one-dimensional chain of $N$ spins. If the Hilbert space of each spin is $d$-dimensional, the Hilbert space of the spin chain itself is $d^N$-dimensional. This means that the number of coefficients needed to describe the wavefunction of the spin chain grows exponentially with the length of the chain. However, if the spin chain is in the ground-state configuration of a gapped Hamiltonian, it can be efficiently written as an MPS wavefunction~\cite{vidal03,hastings2007area,arad_itai}. To do this, we need to associate for every spin site  (labeled by $m=1 \dots N$), a $D_m \times D_{m+1}$-dimensional matrix $A^{i_m}_m$ for each basis state $\ket{i_m} = \ket{1} \dots \ket{d}$. $D = max_m(D_m)$ is the maximum `virtual' or `bond' dimension and approaches a constant value that is independent of the size of the chain for gapped spin chains~\cite{hastings2007area}. With these matrices (which we shall refer to as MPS matrices), we can write the wavefunction with periodic boundary conditions as:
\begin{equation} \label{eq:mps_periodic}
  \ket{\psi} = \sum_{i_1 \dots i_N} Tr[ A^{i_1}_1 A^{i_2}_2 \dots  A^{i_N}_N] \ket{i_1} \dots \ket{i_N}.
\end{equation}
We can also write down the wavefunction for a finite chain as
\begin{equation} \label{eq:mps_open}
\ket{\psi} = \sum_{i_1 \dots i_N} \bra{L}  A^{i_1}_1 A^{i_2}_2 \dots  A^{i_N}_N \ket{R} \ket{i_1} \dots \ket{i_N},
\end{equation}
where, the vectors $\ket{L}$ and $\ket{R}$ live in the virtual space and encode the boundary conditions for the finite chain. If we consider the class of local gapped Hamiltonians without any symmetry constraint, Eqs.~(\ref{eq:mps_periodic},\ref{eq:mps_open}) would represent the general form of ground states. This means we need about $Nd$ matrices to specify the ground state.

\subsection{Matrix product states and measurement-based quantum computation}\label{sec:MPS_MBQC}
To demonstrate the motivation for this work, we first see how we can use MPS wavefunctions for MBQC in the virtual space. Consider encoding quantum information that needs to be processed in one of the virtual boundary vectors of Eq.~(\ref{eq:mps_open}), say $\ket{R}$~\cite{gross2007,gross2007_2,upload1,upload2}. If we perform a projective measurement of the $N$-th spin in some basis $\{ \ket{\phi_N^i} \}$ with the outcome being a projection of the spin onto state $\ket{\phi'_N} \in \{ \ket{\phi^i_N} \}$, we can write the wavefunction of the remaining $N-1$ spins as $\ket{\psi'} = \innerproduct{\phi'_N}{\psi}$ $i.e$
\begin{equation} 
\ket{\psi'} = \sum_{i_1 \dots i_{N-1}} \bra{L}  A^{i_1}_1 A^{i_2}_2 \dots  A^{i_{N-1}}_{N-1}  \ket{R'} \ket{i_1} \dots \ket{i_{N-1}},
\end{equation}
where $\ket{R'}=A'_N \ket{R} $ can be regarded as resulting from $\ket{R}$ undergoing a linear transformation $ A'_N = \sum_{i_N} \innerproduct{\phi'_N}{i_N} A^{i_N}_N$. 

Thus, if we know all the MPS matrices $A^{i_m}_m$ and if these matrices span the space of relevant operations on the virtual vector, we can hope to induce any transformation on the vector by measurement in an appropriate choice of basis. Usually, there is also an overall residual operator which we can account for by adapting subsequent bases of measurement. 

Let us demonstrate this using two translationally invariant canonical resource states. First, the cluster state~\cite{raussendorf2003,WeiRobert_Review} is a $d=2$ spin chain whose wavefunction can be written in terms of $D=2$ MPS matrices:
\begin{eqnarray}
\label{eq:cluster_mps}
A^0 = \begin{pmatrix}
1 & 0 \\
1 & 0
\end{pmatrix},~
A^1 = \begin{pmatrix}
0 & 1 \\
0 & -1
\end{pmatrix}
\end{eqnarray}

Measuring in the $\ket{\pm}=\frac{1}{\sqrt{2}}(\ket{0} \pm \ket{1})$ basis results in the operation $\ket{R}\mapsto H (\sigma_z)^s\ket{R}$ where $s$ labels the measurement outcome and is $0/1$ if the outcome is $\ket{\pm}$ and $H$ is the Hadamard gate $H \equiv \frac{1}{\sqrt{2}}\begin{pmatrix}
1 & 1\\
1 & -1
\end{pmatrix}$. The measurement thus induces the Hadamard operation up to residual operators $(\sigma_x)^s$ as $H (\sigma_z)^s = (-1)^s (\sigma_x)^s H$.

We can induce a different operation, say $R_z(\theta)=e^{-i\theta \sigma_z/2}$ by measuring in the basis $\ket{\phi,\pm}=\frac{1}{\sqrt{2}}(\ket{0}\pm e^{i\phi}\ket{1})$. This results in the operation $\ket{R}\mapsto H (\sigma_z)^s e^{-i\phi \sigma_z/2} \ket{R}$ where $s$ is the measurement outcome which is $0/1$ if the outcome is $\ket{\phi, \pm}$. This is a single-qubit rotation by $\phi$ about the Z axis up to the operator $H (\sigma_z)^s$.

Similarly, we can also perform rotations about the other orthogonal axes and using sequential rotations about different axes by appropriate angles (using, for example, the Euler angle parametrization for rotations), we can perform any arbitrary single-qubit rotation.

The second prominent resource state is the AKLT state~\cite{aklt,miyake_aklt,WeiRobert_Review} which is a spin-1 $(d=3)$ system whose wavefunction can be described by $D=2$ MPS matrices:
\begin{equation}
A^i = \sigma_i~~ (i=x,y,z),
\end{equation}
where the basis of the spins $\{\ket{x}, \ket{y}, \ket{z}\}$ is chosen as 
\begin{eqnarray}
\ket{x} \equiv \frac{1}{\sqrt{2}} (\ket{-1} - \ket{1}), \ket{y} \equiv \frac{i}{\sqrt{2}} (\ket{-1} + \ket{1}), \ket{z} \equiv\ket{0}, \nonumber
\end{eqnarray}
with $\ket{\pm1}$ and $\ket{0}$ being eigenstates of the spin-1 $S_z$ operator.
If we measure the spin in \{$\ket{x}$, $\ket{y}$, $\ket{z}$\} basis, we can induce the operation $\ket{R} \mapsto \sigma_s \ket{R}$ which is the identity operation up to the residual operator $\sigma_s$. We can also induce $R_z(\theta) = e^{-i\theta \sigma_z/2}$ by measuring in the basis $\{\ket{\theta,x} = \cos(\frac{\theta}{2}) \ket{x} - \sin(\frac{\theta}{2}) \ket{y}, \ket{\theta,y}= \sin(\frac{\theta}{2}) \ket{x} + \cos(\frac{\theta}{2}) \ket{y}, \ket{z}\}$. If the measurement outcome is $\ket{z}$ then we have the identity operation with residual operator $\sigma_z$. However, if the outcome is $\ket{\theta,x}$ or $\ket{\theta,y}$ then the operation is $\ket{R} \mapsto \sigma_i e^{(-i\theta \sigma_z/2)} \ket{R}$ where $i = x/y$ if the outcome is $\ket{\theta,i}$. Thus, if we keep measuring till we get either $\ket{\theta,x}$ or $\ket{\theta,y}$ as the outcome, we can induce the required operation up to Pauli residual operators. The extension to rotations about other axes and ultimately to a full set of single-qubit rotations is straightforward. An important difference between the AKLT and cluster states is that for the latter, the length of the spin chain needed for computation is fixed while for the former, it is not.

It was noted that both the 1D AKLT and cluster states belong to a non-trivial topological phase protected by $\ztzt$ symmetry~\cite{wen_Tensor_Entanglement,top1d_pollmanturner} and there have been investigations to see if the ability to support quantum computation can be a property of the phase~\cite{miyake_MBQC_edge,bartlett_haldane,bartlett2012,bartlett_njp2012,bartlett_correlation}. In particular, the authors of~\cite{bartlett2012,bartlett_njp2012} deduce that any non-trivial MPS ground state in the non-trivial $\ztzt$ invariant spin-1 Hamiltonians (Haldane phase) must have the form $A^i = B_i \otimes \sigma_i$ $(i=x,y,z)$. Thus, there always exists a `protected' two-dimensional virtual subspace in the ground states of the Haldane phase on which the Pauli matrices act and in which quantum information can in principle be encoded and processed. While the ground states of the Haldane phase in general do not support non-trivial gate operations, they do allow a protected identity gate operation by measurements in the $\{\ket{x},\ket{y},\ket{z} \}$ basis that only induces Pauli operation on the boundary vectors. 

\section{Main result: Tensor decomposed ground-state form in the presence of a global symmetry}\label{sec:mainresults}
\subsection{SPT phases with an on-site internal symmetry }\label{sec:on-site}
Let us now consider symmetric phases of Hamiltonians that are invariant under the action of a certain symmetry group $G$ on each spin according to some representation $u(g)$. i.e. $[H,\hat{U}(g)]=0$ where $\hat{U}(g) = u_1(g) \otimes  \cdots \otimes u_N(g) $. We consider ground states that do not break the symmetry of the Hamiltonian and are hence left invariant under the transformation $\hat{U}(g)$ up to a complex phase
\begin{equation} \label{eq:wavefunction_invariance}
\hat{U}(g) \ket{\psi} = \chi(g)^N \ket{\psi}.
\end{equation}
Eq.~(\ref{eq:wavefunction_invariance}) can be imposed as a condition on the MPS matrix level (Suppressing the site labels for brevity) as~\cite{top1d_wenold, top1d_wennew, top1d_lukasz,top1d_pollmanturner}
\begin{equation}
  \label{eq:mps_invariance}
  u(g)_{ij} A^j = \chi(g)V^{-1}(g) A^i V(g).
\end{equation}
Note that here and henceforth, when no confusion will arise, we use the Einstein summation convention wherein repeated indices are summed over. Because $u$ is a group representation, group properties impose $\chi$ to be a 1D representation and $V$ to be a \emph{projective representation} of $G$. A projective representation respects group multiplication up to an overall complex phase. 
\begin{equation}\label{eq:projective_defn}
V(g_1) V(g_2) = \omega(g_1,g_2) V(g_1 g_2).
\end{equation}
The complex phases $\omega(g_1,g_2)$ are constrained by associativity of group action and fall into classes labelled by the elements of the second cohomology group of $G$ over complex numbers $H^2(G,\mathbb{C})$ (See Appendix~\ref{app:proj} for some comments on projective representations). In other words, the different elements of $H^2(G,\mathbb{C})$ label different classes of projective representations. It was also shown in~\cite{top1d_wenold,top1d_wennew,top1d_lukasz,top1d_pollmanturner,top1d_schuchgarciacirac} that the different elements of $H^2(G,\mathbb{C})$ represent different SPT phases of matter. In particular, the identity element labels the set of \emph{linear representations} of $G$ (which respect group multiplication exactly) and the corresponding phase of matter is trivial, containing product ground states. We now use the symmetry constraint of Eq.~(\ref{eq:mps_invariance}) to deduce the form of the MPS matrices for a given phase labelled by $\omega \in H^2(G,\mathbb{C})$ using a technique similar to the one presented in~\cite{vidaltensor}.

With only on-site symmetry, the different 1D representations $\chi$ all correspond to the same SPT phase~\cite{top1d_wenold,top1d_wennew}. Hence, we just consider the case when $\chi(g) = 1$ $i.e.$ the trivial 1D irreducible representation (irrep) of $G$. With this, we can rewrite Eq.~(\ref{eq:mps_invariance}) in a more illuminating form:
\begin{equation}
\label{eq:invariant_tensor}
u(g)_{ii'} V(g)_{\alpha \alpha'}V^{-1}(g)_{\beta' \beta} A^{i'}_{\alpha' \beta'} = A^{i}_{\alpha \beta}.
\end{equation}
Eq.~(\ref{eq:invariant_tensor}) shows that the matrices $A^i$ are invariant 3 index tensors. We now organize the vector space of each index as a reduced representation constructed out of copies of linear or projective irreps of $G$.
\begin{equation} \label{eq:vectorspace}
\mathbb{V} \cong~ \bigoplus_a n_a \mathbb{V}_a \cong ~\bigoplus_a \mathbb{D}_a \otimes \mathbb{V}_a.
\end{equation}
If $\mathbb{V}$ is the vector space of any index, $a$ runs over the irreps, $n_a$ is the degeneracy (number of copies) of the irrep $a$ and $\mathbb{D}_a$ is the corresponding degeneracy vector space of $a$. Any basis element in the vector space $ \mathbb{V}$ can be labelled by three numbers as $\ket{a_i, m_i, d_i}$ where $a_i$ labels the irreducible representation and is analogous to the angular momentum label in $SU(2)$, $m_i$ labels the state in $a_i$ and is analogous to the azimuthal quantum number $m_i$ and $d_i$ labels which copy of the irreducible representation $a_i$ is being considered. Symmetry transformations are block-diagonal and act on the $m_i$ labels of each sector $a_i$ but leave the $d_i$ labels alone. So if $U(g)$ is a symmetry that acts on the vector space Eq.~(\ref{eq:vectorspace}) and if $U^a(g)$ is the representation of the $a$-th irrep then
\begin{equation}
U(g) \cong~ \bigoplus_{a}\mathbb{1}^a \otimes  U^a(g).
\end{equation}
Note that for a given physical system, we assume that the vector space of the physical index is known in terms of which irreps and how many copies are contained. However, for a given $\omega \in H^2(G,\mathbb{C})$ which labels the phase we are trying to study the ground-state form of, we have to allow an arbitrary number of copies of each projective irrep from the class $\omega$ to appear in the virtual space indices. Using this organization, Eq.~(\ref{eq:invariant_tensor}) and an application of Schur's lemma after decomposing the fusion of the irreps $a_i$ and $a_\alpha$ determined by the Clebsch-Gordan (CG) series $i \otimes \alpha=\oplus_\gamma n_{i \alpha}^\gamma \gamma$ (see Appendix~\ref{app:cg} for more details), we can write down the MPS matrices for the SPT phase labelled by $\omega$ using a generalized Wigner-Eckart theorem as follows 
\begin{multline}
\label{eq:ACB}
A[\omega]^{a_i m_i d_i}_{(a_\alpha m_\alpha d_\alpha) (a_\beta m_\beta d_\beta)} = \\ \sum_{n=1}^{n^\beta_{i \alpha}}  B^{a_i d_i}_{(a_\alpha d_\alpha) (a_\beta d_\beta;n)} C[\omega]^{a_\beta m_\beta;n}_{a_i m_i, a_\alpha m_\alpha},
\end{multline}
where $C[\omega]^{a_\beta m_\beta;n}_{a_i m_i, a_\alpha m_\alpha}$ denotes the CG coefficients associated with the change of basis of the direct product of linear irrep $i$ and the irrep  $\alpha$ of projective class $\omega$, to the $n$-th copy of irrep $\beta$  of the same projective class $\omega$ (See Appendix~\ref{app:cg} for more details) 
\begin{multline}
\label{eq:CG_definition}
\ket{a_\beta, m_\beta ; n} = \\ \sum_{a_i,m_i,a_\alpha, m_\alpha} C[\omega]^{a_\beta m_\beta;n}_{a_i m_i, a_\alpha m_\alpha} \ket{a_i, m_i} \ket{a_\alpha, m_\alpha}.
\end{multline} 
The entries $B^{a_i d_i}_{(a_\alpha d_\alpha) (a_\beta d_\beta;n)}$ of the MPS matrices are not determined by on-site symmetry considerations alone and depend on the parameters of the Hamiltonian amongst other things. Finally, putting back the site dependence, $\mathfrak{m}=1 \cdots N$ in the MPS matrices, we have
\begin{multline}\label{eq:mps_onsite}
A[\omega]^{a_i m_i d_i}_{(a_\alpha m_\alpha d_\alpha) (a_\beta m_\beta d_\beta);\mathfrak{m}} = \\ \sum_{n=1}^{n^\beta_{i \alpha}} B^{a_i d_i}_{(a_\alpha d_\alpha) (a_\beta d_\beta;n); \mathfrak{m}} C[\omega]^{a_\beta m_\beta;n}_{a_i m_i, a_\alpha m_\alpha} ,
\end{multline}

We see that to construct the ground-state form of an SPT phase labelled by $\omega$, we need the CG coefficients for the direct product of the linear representation of the physical spins and the projective irreps of class $\omega$: $\ket{i}$ and $\ket{\alpha}$. To make sense this, we use the result that every finite group $G$ has associated to it at least one other finite group $\tilde{G}$, called a Schur cover, with the property that every projective representation of $G$ can be lifted to a linear representation of $\tilde{G}$~\cite{karpilovsky}. So we can reinterpret the CG coefficients of a linear and projective representation of $G$ simply as the CG coefficients of two linear representations of $\tilde{G}$. For example, half odd integer $j$ representations are projective representations of $SO(3)$ while integer $j$ are linear representations. However, if we consider the group $SU(2)$ which is the cover of $SO(3)$, both half odd integer and integer $j$ are linear representations and we know that we can find CG coefficients for decompositions of the kind $1 \otimes \frac{1}{2} = \frac{1}{2} \oplus \frac{3}{2}$.

To summarize, in order to find the ground-state forms of different SPT phases of a spin chain that transforms under a certain representation $u(g)$ of $G$, we need to follow the following steps:
\begin{enumerate}
\item Obtain the second cohomology group of $G$, $H^2(G,\mathbb{C})$ whose elements $\omega$ will label the different SPT phases.
\item Obtain the covering group $\tilde{G}$ 
\item Identify the irreps `$i$' of the physical spin among the irreps of $\tilde{G}$.
\item Identify the irreps `$\alpha$' that correspond to the projective class $\omega$. 
\item Obtain CG coefficients corresponding to the fusion of the irreps of the physical spin with each irrep of the projective class $\omega$. (Ref.~\cite{sakata} and Appendix~\ref{app:sakata} gives a technique to calculate the CG coefficients for certain types of decompositions of finite group irreps)
\item Use the CG coefficients in Eq.~(\ref{eq:mps_onsite}) allowing $\alpha$ and $\beta$ to run over all the irreps of class $\omega$ and $i$ to run over the irreps of the physical spin. Each block of the MPS matrices split into a part that is calculated purely from the group $G$ for each phase $\omega$ and a part that is undetermined.
\end{enumerate}

\subsection{Obtaining the tensor decomposition of Eq.~(\ref{eq:ACB})}
\label{app:cg}
For what follows, it is useful to employ a basis independent representation of the tensor $A$,
\begin{equation}
\label{eq:basisindeptensor}
\hat{A} = \sum_{i \alpha \beta } A^i_{\alpha \beta} \outerproduct{i \alpha} {\beta}
\end{equation}

We organize the vector space of each index and label it by three quantum numbers--the irrep $a_j$ (analogous to the spin label $j$), the irrep multiplicity $m_j$ (analogous to the azimuthal quantum number $m_j$) and the irrep degeneracy (the number of copies of the irrep, $d_j$), i.e., $\ket{i} = \ket{a_i,m_i,d_i}$, $\ket{\alpha} = \ket{a_\alpha,m_\alpha,d_\alpha}$ and so on. 
\begin{multline}
\!\!\!\!\!\!  \hat{A} = A^{a_i m_i d_i}_{(a_\alpha m_\alpha d_\alpha) (a_\beta m_\beta d_\beta)}
\outerproduct{a_i, m_i,d_i; a_\alpha,m_\alpha,d_\alpha} {a_\beta,m_\beta,d_\beta}. \nonumber
\end{multline}
The invariance condition is 
\beq \label{eq:invariancetensor}
\hat{U}(g) \hat{A} = \hat{A}, 
\eeq
where $\hat{U}(g)$ effects a symmetry transformation on the basis bras and kets of each irrep as
\begin{multline} \label{eq:transformtensor}
\hat{U}(g) \hat{A} \equiv \\  A^{a_i m_i d_i}_{(a_\alpha m_\alpha d_\alpha) (a_\beta m_\beta d_\beta)}   U(g)^i_{m_i m'_i} V(g)^\alpha_{m_\alpha m'_\alpha} V(g)^{-1\beta}_{m'_\beta m_\beta} \\  \outerproduct{a_i,m'_i,d_i; a_\alpha,m'_\alpha,d_\alpha} {a_\beta,m'_\beta,d_\beta} .
\end{multline}
Note that symmetry transformations act on the $m$ indices for each irrep but leave the $d$ indices unchanged. Eqs.~(\ref{eq:invariancetensor}) and~(\ref{eq:transformtensor}) together give us back the tensor invariance condition
\begin{multline}
U(g)^i_{m_i m'_i} V(g)^\alpha_{m_\alpha m'_\alpha} V(g)^{-1\beta}_{m'_\beta m_\beta} A^{a_i m'_i d_i}_{(a_\alpha m'_\alpha d_\alpha) (a_\beta m'_\beta d_\beta)}  \\= A^{a_i m_i d_i}_{(a_\alpha m_\alpha d_\alpha) (a_\beta m_\beta d_\beta)}.
\end{multline}
This condition is valid for each set of irreps labelled by $(a_i, d_i, a_\alpha, d_\alpha, a_\beta, d_\beta)$. Now consider the Clebsch-Gordan (CG) series $i \otimes \alpha=\oplus_\beta n_{i \alpha}^\beta \beta$. On the basis level we have,	
\begin{multline}
\label{eq:cg}
\ket{a_\beta, m_\beta ; n} = \\ \sum_{a_i,m_i,a_\alpha, m_\alpha} C^{a_\beta m_\beta;n}_{a_i m_i, a_\alpha m_\alpha} \ket{a_i, m_i} \ket{a_\alpha, m_\alpha}.
\end{multline} 
$C[\omega]^{a_\beta m_\beta;n}_{a_i m_i a_\alpha m_\alpha}$ denotes the CG coefficients associated with the change of basis of the direct product of irreps $i$ and $\alpha$ to the $n$-th copy of irrep $\beta$. With this, we rewrite Eq.~(\ref{eq:basisindeptensor}) as
\begin{multline}
\hat{A} =  A^{a_i m_i d_i}_{(a_\alpha m_\alpha d_\alpha) (a_\beta m_\beta d_\beta)}  (C^{-1})^{a_\gamma m_\gamma;n}_{a_i m_i, a_\alpha m_\alpha } \\  \outerproduct{a_\gamma, m_\gamma;n, d_i, d_\alpha} {a_\beta,m_\beta,d_\beta}
\end{multline}

The ket $\ket{a_\gamma, m_\gamma;n, d_i, d_\alpha}$ denotes a basis in the $n$-th copy of $a_\gamma$ irrep obtained from fusing the $d_i$-th copy of irrep $a_i$ and $d_\alpha$-th copy of irrep $a_\alpha$.  If we impose invariance Eq.~(\ref{eq:invariancetensor}) in this new form, we get
\begin{multline}
V(g)^{\gamma;n}_{m_\gamma m'_\gamma}  (C^{-1})^{a_\gamma m'_\gamma;n }_{a_i m_i, a_\alpha m_\alpha} A^{a_i m_i
	d_i}_{(a_\alpha m_\alpha d_\alpha) (a_\beta m'_\beta d_\beta)}
V(g)^{-1\beta}_{m'_\beta m_\beta}\\ =  (C^{-1})^{a_\gamma m_\gamma;n }_{a_i m_i ,a_\alpha m_\alpha } A^{a_i m_i
	d_i}_{(a_\alpha m_\alpha d_\alpha) (a_\beta m_\beta d_\beta)},
\end{multline}
which is equivalent to
\begin{multline}
V(g)^{\gamma;n}_{m_\gamma m'_\gamma} \left[ (C^{-1})^{a_\gamma m'_\gamma;n }_{a_i m_i, a_\alpha m_\alpha} A^{a_i m_i
	d_i}_{(a_\alpha m_\alpha d_\alpha) (a_\beta m_\beta d_\beta)}\right]
=\\  \left[ (C^{-1})^{a_\gamma m_\gamma;n }_{a_i m_i, a_\alpha m_\alpha} A^{a_i m_i
	d_i}_{(a_\alpha m_\alpha d_\alpha) (a_\beta m'_\beta d_\beta)}\right]  V(g)^{\beta}_{m'_\beta m_\beta}\
\end{multline}

Using Schur's lemmas, we can now determine that 
\begin{eqnarray}
\gamma \neq \beta &:&  (C^{-1})^{a_\gamma m_\gamma;n }_{(a_i m_i) (a_\alpha m_\alpha) } A^{a_i m_i
	d_i}_{(a_\alpha m_\alpha d_\alpha) (a_\beta m_\beta d_\beta)} = 0 \nonumber, \\
\gamma = \beta  &:&  (C^{-1})^{a_\gamma m_\gamma; n }_{(a_i m_i) (a_\alpha m_\alpha)} A^{a_i m_i
	d_i}_{(a_\alpha m_\alpha d_\alpha) (a_\beta m_\beta d_\beta)} \propto \delta_{m_\gamma m_\beta} .\nonumber
\end{eqnarray}
This gives us 
\begin{eqnarray*}
	&&\left[ (C^{-1})^{a_\beta m_\beta;n }_{a_i m_i, a_\alpha m_\alpha} A^{a_i m_i
		d_i}_{(a_\alpha m_\alpha d_\alpha) (a_\beta n_\beta d_\beta)}\right] \\&& = \delta_{m_{\beta} n_{\beta}}~ B^{a_i d_i}_{(a_\alpha d_\alpha) (a_\beta d_\beta;n)}
\end{eqnarray*}
This is again a condition valid for each set of irreps labelled by $(a_i, d_i, a_\alpha, d_\alpha, a_\beta, d_\beta)$. Finally, moving $C$ to the right hand side, we get
\beq
\label{eq:decomposed_mps}
A^{a_i m_i d_i}_{(a_\alpha m_\alpha d_\alpha)(a_\beta m_\beta d_\beta)} = \sum_{n=1}^{n_{i \alpha}^\beta}  B^{a_i d_i}_{(a_\alpha d_\alpha) (a_\beta d_\beta;n)} C^{a_\beta m_\beta;n }_{a_i m_i, a_\alpha m_\alpha}
\eeq
If we restrict $V$ to contain only irreps of a class $\omega$, we get Eq.~(\ref{eq:ACB}).

\subsection{SPT phases with on-site symmetry and lattice translation invariance}\label{sec:on-site+trans}
Gapped Hamiltonians with only lattice translation invariance all belong to the same phase~\cite{top1d_wenold,top1d_wennew}. Ground-states of such Hamiltonians and can be described by MPS matrices $A^{i_m}_m$ that are site independent $i.e.$ $A^{i_m}$~\cite{Perez-Garcia2007}. This means that unlike the case for an arbitrary gapped phase where we needed $N d$ matrices to describe a ground state, we now only need $d$ matrices. Eq.~(\ref{eq:mps_periodic}) is simplified to
\begin{equation} \label{eq:mps_trans}
  \ket{\psi} = \sum_{i_1 \dots i_N} Tr[ A^{i_1} A^{i_2} \dots  A^{i_N}]\ket{i_1} \dots \ket{i_N}.
\end{equation}

If we consider gapped Hamiltonians invariant under translation and an on-site symmetry transformation $u(g)$, the conditions of Eqs.~(\ref{eq:wavefunction_invariance}, \ref{eq:mps_invariance}) again hold. However, unlike the case for just on-site symmetry, the different 1D irreps, $\chi(g)$ that appear in Eq.~(\ref{eq:mps_invariance}) now label distinct phases of matter~\cite{top1d_wenold,top1d_wennew}. Different SPT phases are now labelled by $\{ \omega, \chi \}$ where, $\omega \in H^2(G,\mathbb{C})$ labels the different projective classes and $\chi$ labels the different 1D irreps of the group $G$. We now see how we can constrain the ground-state form of these SPT phases extending the results of Sec~\ref{sec:on-site}

Let us rewrite Eq.~(\ref{eq:mps_invariance}) by absorbing $\chi(g)$ on the right hand side into $u(g)$ on the left and call $\tilde{u}(g) = \chi^*(g) u(g)$
\begin{equation}
  \tilde{u}(g)_{ij} A^j =V^{-1}(g) A^i V(g),
\end{equation}

Since re-phasing a representation with a 1D irrep is still a representation, we can find the new irrep content of $\tilde{u}(g)$. With this, we can repeat the procedure of Sec~\ref{sec:on-site} and obtain the MPS matrices for ground states of a given spin system in any phase labelled by $\{\omega, \chi\}$ as
\begin{multline}\label{eq:mps_onsite+trans}
A[\omega,\chi]^{a_i m_i d_i}_{(a_\alpha m_\alpha d_\alpha) (a_\beta m_\beta d_\beta)} =\\ \sum_{n=1}^{n^\beta_{i \alpha}}  B^{a_i d_i}_{(a_\alpha d_\alpha) (a_\beta d_\beta;n)} C[\omega,\chi]^{a_\beta m_\beta;n}_{a_{i'} m_{i'}, a_\alpha m_\alpha}.
\end{multline}
Where $i \otimes \chi \cong i'$ is some linear irrep of $G$ that can easily be identified by calculating the characters of $i'$ and $C[\omega,\chi]^{a_\beta m_\beta;n}_{a_{i'} m_{i'} a_\alpha m_\alpha}$ denote the CG coefficients associated with the change of basis of the direct product of linear irrep $i'$ and the irrep  $\alpha$ of projective class $\omega$, to the $n$-th copy of irrep $\beta$  of the same projective class $\omega$.

To summarize, in order to find the ground-state forms of different SPT phases for a spin chain that transforms under a certain representation $u(g)$ of $G$  and that is translationally invariant, we need to follow the steps below:
\begin{enumerate}
\item Obtain $H^2(G,\mathbb{C})$ and the covering group $\tilde{G}$.
\item Identify the irreps `$i$' of the physical spin among the irreps of $\tilde{G}$. 
\item Identify the different 1D irreps of $G$, $\chi$ among the 1D irreps of $\tilde{G}$.
\item Identify the irreps `$i'$' corresponding to re-phasing the physical spin irreps `$i$' with $\chi$.
\item Identify which irreps `$\alpha$' correspond to the projective class $\omega$. 
\item Obtain CG coefficients corresponding to the fusion of the re-phased irreps of the physical spin with each irrep of the projective class $\omega$.
\item Use the CG coefficients in Eq.~(\ref{eq:mps_onsite+trans}) allowing $\alpha$ and $\beta$ to run over all the irreps of class $\omega$ and $i'$ to run over the re-phased irreps of the physical spin.
\end{enumerate}

We can also consider the ground-state forms constrained by other space-time symmetries like inversion and time-reversal and combinations with on-site symmetry which have also been classified. While there are constraints imposed on the entries of the MPS matrices, we do not immediately see a useful structure like we do with on-site symmetries with or without translation invariance mentioned above. However, for the sake of completeness, we have presented the results in the Appendices.~\ref{app:parity+time},\ref{app:parity+time+onsite}.

\section{Examples of ground-state forms for various on-site symmetries}\label{sec:examples}
In this section, we use the results of the decomposition scheme discussed in the previous section to write down several ground-state forms of SPT phases protected by various on-site symmetries. 

We will focus on some subgroups of $SO(3)$ that have a particular non-trivial second cohomology group $\hgc = \mathbb{Z}_2$ and hence one class of non-trivial projective representations. (But our formalism can be applied to groups of other second cohomology group as well.) We will also focus on constructing ground states that are topologically non-trivial $i.e.$ states that cannot be connected to the product state and whose virtual space representation corresponds to non-trivial projective representation. This is because these non-trivial states are sufficiently entangled and may offer advantages for information processing. We shall use the following conventions: 
\begin{enumerate}
	\item Groups are defined by a \emph{presentation} $\innerproduct{S}{R}$ $i.e.$ by listing the set $S$ of generators and the set $R$ of relations between them. 
	\item Representations are written by listing those of the generating set $S$. Any element in the group can always be written as the product of powers of the subset of $S$.
	\item $\tilde{G}$ denotes the Schur cover of $G$ that contains the linear and projective irreps of $G$. 
	\item We list the irreps of $\tilde{G}$ and label different classes of irreps by elements of $H^2(G,\mathbb{C})$. These correspond to the linear and projective irreps of $G$.
	\item $\chi_i$ denotes different 1D irreps of $G$ (and $\tilde{G}$).
	\item MPS matrices are constructed up to a similarity transformation for a particular basis of the physical spin that will be mentioned.
	\item Pauli matrices are denoted as $\sigma_i = \{\sigma_x, \sigma_y, \sigma_z\}$ or $\sigma_i = \{\sigma_1, \sigma_2, \sigma_3\}$.
\end{enumerate}
\subsection{Haldane phase ($\ztzt$)}\label{sec:z2z2}
Consider a chain of three level spins (d=3) that is invariant under a three-dimensional representation of $\ztzt$ written as a restricted set of spin-1 $SO(3)$ rotations,
\begin{equation}\label{eq:u_z2z2}
u(g) = \{\mathbb{1},R_x(\pi),R_y(\pi),R_z(\pi)\}.
\end{equation}
$\ztzt$, also known as the Klein four-group, is the group of symmetries of a rhombus or a rectangle (which are not squares) generated by $\pi$ flips about perpendicular axes in the plane of the object. Some information about the group are follows:
\begin{itemize}
	\item $G=\ztzt = \langle a,x | a^2 = x^2 = (ax)^2 = e\rangle $
	\item $H^2(G,\mathbb{C}) = \mathbb{Z}_2 = \{e,a\}$
	\item $\tilde{G}= D_8 : \langle a,x|a^4 = x^2 = (ax)^2 = e \rangle$
	\item Class $e$ irreps of $\tilde{G}$:\\
	$1_{(p,q)}: a \mapsto (-1)^p,~x \mapsto(-1)^q$, $~(p,q)$ $\in$ $\{0,1\}$
	\item Class $a$ irreps of $\tilde{G}$:\\
	$\tilde{2}: a \mapsto i\sigma_z $,~$x \mapsto\sigma_x$
\end{itemize}	
The three-dimensional representation can be shown to be $u(g) \cong 1_{(0,1)} \oplus 1_{(1,0)} \oplus 1_{(1,1)}$. Which means, with an appropriate choice of basis, each basis state of the 3 level spin transforms as one of the non-trivial 1D irreps. We can check that 
$\{\ket{x} \equiv \frac{1}{\sqrt{2}} (\ket{-1}-\ket{1}),~\ket{y} \equiv \frac{i}{\sqrt{2}} (\ket{-1}+\ket{1}), ~\ket{z} \equiv \ket{0}\}$ is such an appropriate basis where $u(g)$ is block diagonal. Calculating the CG coefficients, we get the following MPS matrices:
\begin{equation}
A^i = B_i \otimes \sigma_i, \label{eqn:factorize}
\end{equation}
where $B_i$ are undetermined and $\sigma_i$ are the Pauli matrices. We thus have reproduced the result of Ref.~\cite{bartlett2012} using our general framework.

\subsection{$D_4$ invariant SPT phase}\label{sec:d4}
$D_{n}$, the dihedral group is the symmetry group of a planar $n$ sided polygon and has projective representations when $n$ is even. Some information about the group are as follows. We only look at the case of even $n$.

\begin{enumerate}
	\item G = $D_n$ = $\innerproduct{a,x}{a^n = x^2 = (ax)^2 = e }$
	\item $\hgc = \mathbb{Z}_2 = \{e,a\}$
	\item $\tilde{G}$ = $Q_{n}$: \\$\innerproduct{a,x}{a^{2n} = x^4 = e, a^n = x^2, x a x^{-1 } = a^{-1}}$.
	\item Class $e$ irreps of $\tilde{G}:$
	\begin{enumerate}
		\item 	$1_{(p,q)}: a \mapsto (-1)^p,~x \mapsto (-1)^q$, $~(p,q)$ $\in$ $\{0,1\}$
		\item $2_{(k)}: a \mapsto \begin{pmatrix}
		e^{ {-i k \eta_n}/{2} } & 0\\
		0 & e^{ {i k \eta_n}/{2} },
		\end{pmatrix} $,
		$x \mapsto -i \sigma_y$,\\ $k = 2,4,\ldots n-2$, $\eta_n = 2 \pi/ n$
	\end{enumerate}
	\item Class $a$ irreps of $\tilde{G}:$
	\begin{enumerate}
		\item $\tilde{2}_{(k)}: a \mapsto \begin{pmatrix}
		e^{ {-i k \eta_n}/{2} } & 0\\
		0 & e^{ {i k \eta_n}/{2} }
		\end{pmatrix} $,
		$x \mapsto  \sigma_y~$,\\ $k = 1,3,\ldots n-1$, $\eta_n = 2 \pi/ n$
	\end{enumerate}	
\end{enumerate}
Let us now consider the group $D_4$. This is the group of symmetries of a square generated by $\frac{\pi}{2}$ rotations about the symmetry axis perpendicular to the plane and reflections about symmetry axes in the plane of the square. We consider the following irreps (using a different choice of basis than the one mentioned above).
\begin{enumerate}
	\item Linear irrep $2_{(2)}: a \mapsto i \sigma_y$, $x\mapsto \sigma_z$
	\item Projective irreps:\\ $\tilde{2}_{(1/3)}:a \mapsto \frac{1}{\sqrt{2}} (\pm \mathbb{1}-i \sigma_y),~x \mapsto i \sigma_z$
\end{enumerate}
If we consider a $d=2$ physical spin transforming under the 2D irrep $2_{(2)}$ the non-trivial MPS matrices associated with the two basis states $\ket{i} = \ket{0}, \ket{1}$ are obtained by calculating the CG coefficients:
\begin{eqnarray}
A^0 &=& \begin{pmatrix}
B_{11} \otimes \sigma_z & B_{13} \otimes \mathbb{1}\\
B_{31} \otimes \mathbb{1} & B_{33} \otimes \sigma_z\\
\end{pmatrix}, \\
A^1 &=& \begin{pmatrix}
B_{11} \otimes -\sigma_x & B_{13} \otimes -i \sigma_y\\
B_{31} \otimes i \sigma_y & B_{33} \otimes \sigma_x\\
\end{pmatrix}.
\end{eqnarray}
The MPS matrices cannot be further factorized, and thus we do not even have the protected identity gate. 
\subsection{$A_4$ invariant SPT phase}\label{subsec:a4}
$A_4$, the alternating group of degree four, is the group of chiral or rotational symmetries of a regular tetrahedron generated by rotations (no reflections) about various symmetry axes. It is also the group of even permutations on four elements, i.e. a subgroup of $S_4$ to be discussed next. Some information about the group are as follows.
\begin{enumerate}
	\item G = $A_4$ = $\innerproduct{a,x}{a^3 = x^2 = (ax)^3 = e }$
	\item $\hgc = \mathbb{Z}_2 = \{e,a\}$
	\item $\tilde{G}$ = $\tilde{T}$ : $\innerproduct{a,x}{a^3 = x^2 = v, v^2= (ax)^3 = e}$.
		\item Class $e$ irreps of $\tilde{G}:$
		\begin{enumerate}
			\item $1_{(p)}: a \mapsto e^{2 \pi i p/3},~x \mapsto 1$, $p=0,1,2$
			\item $3: a \mapsto \begin{pmatrix}
			0 & 1 & 0\\
			0 & 0 & 1\\
			1 & 0 & 0
			\end{pmatrix}$, $x \mapsto \begin{pmatrix}
			1 & 0 & 0 \\
			0 & -1 & 0\\
			0 & 0 & -1
			\end{pmatrix}$
		\end{enumerate}
		\item Class $a$ irreps of $\tilde{G}:$
		\begin{enumerate}
			\item $\tilde{2}_{(p)}: a \mapsto e^{2 \pi i p/3} \frac{1}{2} [\mathbb{1} + i (\sigma_x + \sigma_y + \sigma_z)] $,\\$ x \mapsto i \sigma_x $, $p = 0,1,2$
		\end{enumerate}
	\end{enumerate}
	If we consider the physical spin transforming under the only 3D linear irrep, the non-trivial MPS matrices associated with the three basis states $\ket{i} = \ket{1}, \ket{2}, \ket{3}$ are obtained by calculating the CG coefficients:
	\begin{eqnarray}
	A^i &=& B_{i} \otimes \sigma_i\\
	B_{i} &=& V^{i-1} B V^{*i-1}\\
	V &=& \begin{pmatrix}
	\mathbb{1} & 0 & 0\\
	0 & \omega~\mathbb{1} & 0\\
	0 & 0 & \omega^*~\mathbb{1}
	\end{pmatrix},~\omega = e^{2 \pi i /3} \nonumber \\
	B &=& \begin{pmatrix}
	B_{00} & B_{01} & B_{02}\\
	B_{10} & B_{11} & B_{12}\\
	B_{20} & B_{21} & B_{22}\\
	\end{pmatrix}. \nonumber
	\end{eqnarray}
	Similar to Eq.~(\ref{eqn:factorize}) the MPS matrices are factorized into to two parts, and the indenity gate is protected by the symmetry.  We remark that imposing inversion or time-reversal symmetry does not further simplify the $B$'s structure.
\subsection{$S_4$ invariant SPT phase}\label{sec:s4}
$S_4$, the symmetric group of degree four, is the group of achiral or full symmetries of a tetrahedron generated by rotations and reflections about various symmetry axes. It is also the group of all permutations of four elements. Some information about the group are as follows.
\begin{enumerate}
	\item G = $S_4$ = $\innerproduct{a,b,c}{a^2 = b^3 = c^4 = abc = e }$
	\item $\hgc = \mathbb{Z}_2 = \{e,a\}$
	\item $\tilde{G}$ = $O'$ : $\innerproduct{a,b,c}{a^2 = b^3 = c^4 = abc = v, v^2= e }$
	\item Class $e$ irreps of $\tilde{G}:$ $(a= tk,~ b = s,~ c = s^2 k t)$
	\begin{enumerate}
		\item $1_{(p)}:t \mapsto (-1)^p,~ k \mapsto 1,~ s \mapsto 1$, $p= \{0,1\}$
		\item $s: t \mapsto \sigma_x,~ k \mapsto \mathbb{1},~ s\mapsto \begin{pmatrix}
		e^{2 \pi i/3} & 0\\
		0 & e^{-2 \pi i/3}
		\end{pmatrix}$
		\item $3_{(p)}$: $k \mapsto \begin{pmatrix}
			1 & 0 & 0 \\
			0 & -1 & 0\\
			0 & 0 & -1
		\end{pmatrix}$,~$
		s \mapsto \begin{pmatrix}
		0 & 1 & 0\\
		0 & 0 & 1\\
		1 & 0 & 0
		\end{pmatrix}$,\\  $t \mapsto (-1)^p \begin{pmatrix}
		1 & 0 & 0\\
		0 & 0 & 1\\
		0 & 1 & 0
		\end{pmatrix}$, $p = 0,1$
	\end{enumerate}
	\item Class $a$ irreps of $\tilde{G}:$
	\begin{enumerate}
		\item $\tilde{2}_{(p)}:t \mapsto (-1)^p \frac{i}{\sqrt{2}}(\sigma_z - \sigma_y),~ k \mapsto i \sigma_x,\\ s \mapsto \frac{1}{2} [\mathbb{1} + i (\sigma_x + \sigma_y + \sigma_z)]$ , $p =0,1$
		\item $\tilde{4} = 2 \otimes \tilde{2}_{(0)}$
	\end{enumerate}
\end{enumerate}
If we consider the physical spin transforming under one of the 3D linear irreps, $3_{(1)}$ the non-trivial MPS matrices associated with the three basis states are obtained by calculating the CG coefficients:
\begin{eqnarray}
A^i &=& B_{i} \otimes \sigma_i\\
B_{i} &=& \begin{pmatrix}
B_{2_02_0} & 0 & B_{2_0 4} \otimes u^\dagger_{i-1}\\
0 & B_{2_12_1} & B_{2_1 4} \otimes v^\dagger_{i-1}\\
B_{4 2_0} \otimes u_{i-1} & 
B_{4 2_1} \otimes v_{i-1} &
B_{4 4} \otimes \mathbb{1}_2 + \tilde{B}_{44} \otimes f_{i-1}
\end{pmatrix} \nonumber\\
u_i &=& \begin{pmatrix}
\omega^{*i}\\
\omega^{i}
\end{pmatrix},~ 
v_i =  \begin{pmatrix}
\omega^{*i}\\
-\omega^{i}
\end{pmatrix} \nonumber\\
f_i &=&  \begin{pmatrix}
0 & \omega^i\\
\omega^{*i} & 0
\end{pmatrix},~ \omega = e^{2 \pi i/3} \nonumber
\end{eqnarray}
We observe that  if we restrict the $B_i$ matrix to only the bottom right block and set the two matrices to scalars, $B_{4 4} = \cos(\frac{\theta}{2})$ and $\tilde{B}_{4 4} = e^{i\phi} \sin(\frac{\theta}{2})$, then it reduces to the one used for the buffering scheme in Ref.~\cite{MillerMiyake14} up to a change of basis.

\subsection{Summary of new SPT phases with identity gate protection}
We now list, from the examples in the previous section, those SPT ground states which allow the perfect operation of the identity gate according to the scheme reviewed in Sec.~\ref{sec:MPS_MBQC}. We see that the MPS matrices for non-trivial ground states of $d=3$ (i.e. of spin magnitude $S=1$) spin chains protected by $\ztzt,~A_4,~S_4$ all have the form 
\begin{eqnarray} \label{eq:good_form_for_identity}
A^i = B_i \otimes \sigma_i.
\end{eqnarray}
 Following the convention of Ref.~\cite{bartlett2012}, we call $B_i$ the junk part and $\sigma_i$ the protected part. We also note that our convention of placing the protected and junk parts is in reverse order as compared to the convention used in Refs.~\cite{bartlett2012,bartlett_njp2012,MillerMiyake14}. This is for notational consistency in this paper.
  
Consider encoding qubit information $\ket{\psi}$ in the protected part of the right boundary virtual space with the junk part arbitrarily set to some state $\ket{J}$ in any of these ground states.
\begin{eqnarray}
\ket{R} = \ket{J} \otimes \ket{\psi}
\end{eqnarray}

If we perform a measurement on the rightmost $i.e$ $N$-th spin in the basis $\ket{x}, \ket{y}, \ket{z}$ in which the MPS matrices have the form of Eq.~(\ref{eq:good_form_for_identity}) with an outcome $\ket{k_N}$, we induce a transformation of the boundary vector by~(\ref{sec:MPS_MBQC})
\begin{eqnarray}
\ket{R} &\mapsto& A^{k_N} \ket{R}\\
\implies \ket{J} \otimes \ket{\psi} &\mapsto& B_{k_N} \ket{J}\otimes \sigma_{k_N} \ket{\psi} 
\end{eqnarray}
The qubit information $\ket{\psi}$ is unchanged upto an inconsequential Pauli operator $\sigma_{k_N}$ which can be corrected for by a change of readout basis. In fact, we can measure several spins (say $m$ from the right) and we still have the perfect operation of the identity gate upto a residual operator $\sigma_{k_{N-m}} \ldots \sigma_{k_N}$. This means all these ground states allow a protected subspace with perfect identity gate operation, which allows for perfect transmission of quantum information encoded in the projected subspace. 

However, note that if we measure in a different basis formed by a linear combination of $\ket{x}, \ket{y}, \ket{z}$, it is easy to check that the boundary vector $\ket{R} = \ket{J} \otimes \ket{\psi}$ no longer remains decomposed into protected and junk parts and, in general, there will be mixing between the two vector spaces. As an illustration, if a measurement outcome of $\frac{1}{\sqrt{2}}(\ket{x} + \ket{y})$ is obtained, the induced transformation on $\ket{R}$ is (up to an overall factor) 
\begin{eqnarray}
\ket{J} \otimes \ket{\psi} \mapsto B_x \ket{J} \otimes \sigma_x \ket{\psi} + B_y \ket{J} \otimes \sigma_y \ket{\psi}.
\end{eqnarray}
 Thus, in general only the identity gate is protected in the ground states of these phases. However, if it were possible that $B_i$ is independent of physical index $i$, then arbitrary single-qubit gates would be possible, as mixing will not occur. It is worth noting that when $B_i$ is independent of the index $i$, the corresponding wavefunction is identically the AKLT state. We had hoped that imposing additional symmetry like parity and/or time reversal invariance might give further constraints on the matrices $B_i$'s and thereby allow non-trivial gate operations. But we checked (using results of Appendix.~\ref{app:parity+time+onsite}) that imposing these additional symmetries on the $\ztzt, \af$ and $S_4$ SPT ground states listed above does not induce ground states that could provide universal qubit operations.

\section{An $A_4$ symmetric Hamiltonian}
\label{sec:model_hamiltonian}

Here we ask a slightly less general question: can one find a particular Hamiltonian with symmetry such that there is an extended region (not necessarily at all points of a phase) in the phase diagram that the ground states can provide universal qubit operations in the framework of MBQC? 
 Below we first construct a specific Hamiltonian that possesses $A_4$ and parity symmetry, which can be regarded as perturbing the spin-1 AKLT Hamiltonian. 
Then we present a numerical investigation and show that indeed there exists a finite parameter region where the ground states are exactly (here and henceforth, exact is defined up to machine precision) the AKLT state, and can therefore serve as a resource state for implementing universal single-qubit gates. After the numerical investigation, we present analytic understanding why such an extended region of AKLT ground states can exist.

\begin{figure}[ht]
	\includegraphics[scale = 0.63]{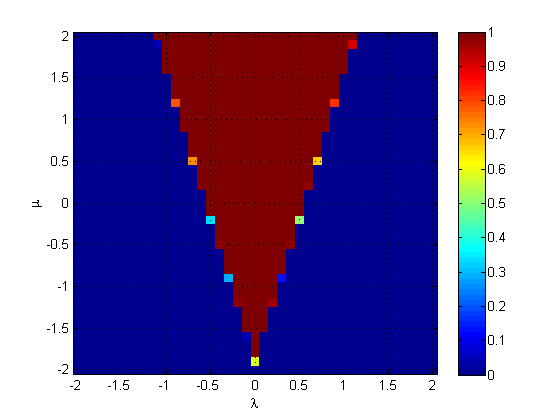}
	\caption{(Color online) Fidelity of ground states with the AKLT state. It is seen that there is an extended region such that the ground state is exactly the AKLT state.}
	\label{fig:Fidelity}
\end{figure}

\subsection{Construction of $A_4$ symmetric Hamiltonian}
We will now construct the $\af$ and inversion symmetric Hamiltonian and study its phase diagram. We use group invariant polynomials as building blocks to construct Hermitian operators invariant under group action. A group $G$ invariant $n$-variable polynomial $f(x_1,x_2, \dots  x_n)$ is unchanged when the $n$-tuplet of variables $\left(x_1, x_2 \dots x_n \right)$ is transformed under an $n$-dimensional representation of the group $D(g)$.
\begin{eqnarray}
f(x'_1, x'_2 \dots x'_n) &=& f(x_1, x_2 \dots x_n) \\
x'_i &=&  D(g)_{i j} x_j.
\end{eqnarray}
If we have $n$ Hermitian operators $X_{i=1 \dots n}$ that are $n$-dimensional and transform covariantly like the $n$ variables of the polynomial $x_{i=1 \dots n}$, $i.e.$ $D(g) X_i \dagr{D}(g) = D(g)_{i j} X_j$, then we can elevate the group invariant polynomials to group invariant operators as $f(x_1,x_2, \dots  x_n) \rightarrow f(X_i, X_2 \dots X_n)$ carefully taking into account that unlike the numbers $x_i$, the operators $X_i$ do not commute. 

Since we need three-dimensional operators of $\af$, we consider the set of independent three variable polynomials invariant under the 3D irrep of $\af$~\cite{ramond_group}:
\begin{eqnarray}
f_1(x,y,z) &=& x^2 + y^2 + z^2,\\
f_2(x,y,z) &=& x^4 + y^4 + z^4,\\
f_3(x,y,z) &=& xyz.
\end{eqnarray}
We know that the spin operators $S^i$ satisfying $ [S^i, S^j] = i \epsilon_{ijk} S^k $ transform covariantly under any $SO(3)$ rotation, in particular for the finite set of rotations that corresponds to the subgroup $\af \in SO(3)$. Thus, to find invariant operators for the three-dimensional representation $3$, we need to take the spin operators in the appropriate three-dimensional basis $\ket{x}, \ket{y}, \ket{z}$ as defined in Sec.~\ref{subsec:a4} and elevate the polynomials $f_1, f_2, f_3$ to operators as
\begin{eqnarray}
F_1 &=& S^x_a S^x_b + S^y_a S^y_b + S^z_a S^z_b,\\
F_2 &=& (S^x_a S^x_b)^2 + (S^y_a S^y_b)^2 + (S^z_a S^z_b)^2 ,\\
F_3 &=& S^x_a S^y_b S^z_c + S^z_a S^x_b S^y_c + S^y_a S^z_b S^x_c \nonumber \\
&+& S^y_a S^x_b S^z_c + S^x_a S^z_b S^y_c + S^z_a S^y_b S^x_c, 
\end{eqnarray}
where the indices $a,b,c$ label collectively any other quantum numbers like lattice sites and can be chosen as per convenience, say to make the operators local. As a model Hamiltonian, we could use any function of the invariant operators $F_1$, $F_2$ and $F_3$ and ensure that everything is symmetric under the exchange of lattice labels to impose inversion symmetry. In particular, the AKLT state is the unique ground state of a particular combination of the invariant operators but it has  a larger symmetry group, $SO(3)$.
\begin{eqnarray}
H_{AKLT} = \sum_i \left[ \vec{S}_i\cdot\vec{S}_{i+1} + \frac{1}{3} (\vec{S}_i\cdot\vec{S}_{i+1})^2 \right], \\
\mbox{where}~\vec{S}_i\cdot\vec{S}_{i+1} \equiv S^x_i S^x_{i+1} + S^y_i S^y_{i+1} + S^z_i S^z_{i+1}. \nonumber
\end{eqnarray}
Thus we can consider adding two other combinations to the AKLT Hamiltonian so as to break the $SO(3)$ symmetry to $\af$ by using $\af$ invariant perturbations:
\begin{eqnarray}
H_q &=& \sum_i \left[ (\vec{S}^2_i\cdot\vec{S}^2_{i+1}) - \frac{1}{3} (\vec{S}_i\cdot\vec{S}_{i+1})^2  \right],\\
\mbox{where}&&\vec{S}^2_i\cdot\vec{S}^2_{i+1} \equiv (S^x_i S^x_{i+1})^2 + (S^y_i S^y_{i+1})^2 + (S^z_i S^z_{i+1})^2,\nonumber
\end{eqnarray}
and
\begin{multline}
H_c = \sum_i [ (S^x S^y)_i S^z_{i+1} + (S^z S^x)_i S^y_{i+1} + (S^y S^z)_i S^x_{i+1} \\
+ (S^y S^x)_i S^z_{i+1} + (S^x S^z)_i S^y_{i+1} + (S^z S^y)_i S^x_{i+1} \\
+ S^x_{i} (S^y S^z)_{i+1} + S^z_{i} (S^x S^y)_{i+1} + S^y_{i} (S^z S^x)_{i+1}  \\
+  S^x_{i} (S^z S^y)_{i+1} + S^z_{i} (S^y S^x)_{i+1} + S^y_{i} (S^x S^z)_{i+1}]. 
\end{multline}
With these pieces, we arrive at the total Hamiltonian which is $\af$ and inversion symmetric,
\begin{equation}
\label{eqn:H}
H = H_{AKLT} + \lambda H_c + \mu H_q.
\end{equation}

\subsection{Checking AKLT as the ground state}
%\noindent {\bf Checking AKLT as the ground state}.
The AKLT state  $|\psi_{\rm AKLT}\rangle$ has the MPS representation  $A^x=\sigma_x$, $A^y=\sigma_y$, and $A^z=\sigma_z$ in the basis of \{$|x\rangle$, $|y\rangle$, $|z\rangle$\} defined earlier. We know that at $\lambda=\mu=0$ the ground state of the Hamiltonian~(\ref{eqn:H}) is uniquely the AKLT state. We would like to know whether there is an extended region of $(\lambda,\mu)$ around $(0,0)$ such that the ground state is also the AKLT state. We do this numerically by first solving the ground state $|\psi_G\rangle$ of the Hamiltonian~(\ref{eqn:H}) using the infinite time-evolving bond decimation (iTEBD) algorithm invented by Vidal~\cite{vidal2007} and then calculating the fidelity between these two states $f=|\langle \psi_G|\psi_{\rm AKLT}\rangle|^2$. As shown in Fig.~\ref{fig:Fidelity} we indeed see that there is an extended region in this Hamiltonian such that the ground state is exactly the AKLT state and thus a useful resource state for universal single-qubit MBQC.  \\

\subsection{Analytic understanding}
%\noindent {\bf Analytic understanding}. 
We now analyze why such an extended region of AKLT is possible and calculate analytically the boundary of the AKLT region in the $\lambda$-$\mu$ plane, shown in Fig.~\ref{fig:Fidelity}. First we recall that the interaction between sites $i$ and $i+1$ of ${H_{AKLT}}$ is a projection to the joint $S=2$ subspace. More precisely,
\begin{equation}
(H_{AKLT})_{i,i+1}= 2 \sum_{m=-2}^{2} P_{|S=2,m\rangle}- \frac{2}{3}\openone,
\end{equation}
where we have defined the projector $P_{|\psi\rangle}\equiv |\psi\rangle\langle\psi|$ associated with the state $|\psi\rangle$, $|S=2,m\rangle$ denotes the eigenbasis of the joint spin-2 states for neighboring sites $i$ and $i+1$, and $\openone$ is the identity operator in the spin-2 subspace.

For the quartic Hamiltonian, it is seen by straightforward calculation that
\begin{equation}
(H_q)_{i,i+1}=P_{ (|S=2,2\rangle+|S=2,-2\rangle)/\sqrt{2}}\,+P_{|S=2,0\rangle}+\frac{2}{3}\openone.
\end{equation}
For the cubic Hamiltonian, it is seen that
\begin{equation}
(H_c)_{i,i+1}=2\sqrt{3}\Big(P_{|\phi^+\rangle}-P_{|\phi^-\rangle}\Big),
\end{equation}
where 
\begin{eqnarray}
|\phi^\pm\rangle&\equiv & (|S=2,m=2\rangle+|S=2,m=-2\rangle \nonumber \\
&&\pm i \sqrt{2}|S=2,m=0\rangle)/2.
\end{eqnarray}

Since the AKLT state is annihilated by any  spin-2 projectors, it will remain the ground state if the following operator is positive,
\begin{eqnarray}
h(\lambda,\mu)&\equiv& 2 P_{|S=2,m=2\rangle}+2 P_{|S=2,m=-2\rangle}+2 P_{|S=2,m=0\rangle} \nonumber\\
&&+ 2\sqrt{3}\lambda\Big(P_{|\phi^+\rangle}-P_{|\phi^-\rangle}\Big) + \mu \, P_{|S=2,m=0\rangle} \nonumber\\
&& +  \mu\, P_{ (|S=2,m=2\rangle+|S=2,m=-2\rangle)/\sqrt{2}},
\end{eqnarray}
which, in the basis of $|S=2,m=\{\pm 2, 0\}\rangle$ is the following $3\times3$ matrix,
\begin{equation}
h(\lambda,\mu)=\left(\begin{array}{ccc}
2+{\mu}/{2}&{\mu}/{2} & -i\sqrt{6}\lambda\\
{\mu}/{2} & 2 +{\mu}/{2} & -i\sqrt{6}\lambda\\
i\sqrt{6}\lambda & i \sqrt{6}\lambda & 2+\mu \end{array}\right).
\end{equation}
By direct diagonalization, we find that the matrix $h(\lambda,\mu)$ is non-negative when $\mu \pm 2\sqrt{3}\lambda +2 > 0$ which indeed gives the region of the AKLT in Fig.~\ref{fig:Fidelity}. 
\section{Summary}\label{sec:summary}
We have presented a straightforward and general formalism for investigating the structure of a wavefunction as constrained (or protected) by a discrete symmetry  group. The wavefunction is organized into two parts: (1) a CG part, whose form is inferred from the symmetry group and (2) a part whose form is not constrained by the symmetry. From the viewpoint of measurement-based quantum computation, one can then use this formalism to discuss whether the ground state of an SPT phase protected by a given symmetry group allows protected gate operations. This happens when, for example, the MPS matrices $A^i$ decompose into the form $A^i = B_i \otimes \sigma_i$ $i.e.$ the virtual vector space decomposes into junk and protected parts.  Generically speaking, the identity gate is not necessarily protected in an arbitrary SPT phase. With the new formalism, we recovered the results of the $\ztzt$ case previously obtained in Ref.~\cite{bartlett2012} and obtain the MPS forms for several other groups. We show that $A_4$ and $S_4$ groups also allow protected identity gate operation. We also constructed a Hamiltonian with $A_4$ and inversion symmetry and found that in an extended region of a two-parameter space, the ground state is exactly the AKLT state. Using the formulation developed here, further exploration of 1D SPT phases and gate protection can be made with arbitrary finite groups. The MPS forms can also allow the study of the properties of 1D SPT phases which would be of interest to the condensed matter community.

Despite the search we still have not identified a 1D SPT phase that generically supports arbitrary universal single-qubit gates, contrary to what we had hoped for. The only gate that can be naturally protected generically in an entire SPT phase is the identity gate. Although not satisfying in terms of quantum computation, it is useful in terms of transmitting quantum information over long distances. The buffering technique recnetly invented in Ref.~\cite{MillerMiyake14} seems necessary to bring forth universal gates, as demonstrated for the $S_4$ symmetry, and it would be interesting to know whether this can be applied generically to all SPT phases with the identity gate known to be protected. Another open question that naturally arises is whether there exists a 2D SPT phase where all ground states in the phase support protected universal quantum computation.
\label{sec:summary}

\medskip \noindent {\bf Acknowledgement.} The authors would like to thank Naveen Prabhakar and Robert Raussendorf for several helpful discussions, and especially Akimasa Miyake for pointing out a mistake in an earlier draft, where the inversion symmetry was incorrectly imposed. This work was supported by the
National Science Foundation under Grants No. PHY 1314748 and No. PHY
1333903.

\bibliography{SPT_Tensor}

\appendix

\section{Some remarks on projective representations} \label{app:proj}
A projective representation respects group multiplication up to a complex phase i.e.,
\begin{equation}
V(g_1) V(g_2) = \omega(g_1,g_2) V(g_1 g_2),
\end{equation}
Group associativity places constrains the phase $\omega$:
\begin{eqnarray}  \label{eq:cocycle_condition}
V(g_1)(V(g_2)V(g_3)) &=& (V(g_1)V(g_2))V(g_3)   \\
\implies \omega(g_1,g_2 g_3)\omega(g_2,g_3) &=& \omega(g_1,g_2)\omega(g_1 g_2,g_3).
\end{eqnarray}
The possible $\omega$'s fall into different classes and each class has its own set of irreducible representations. These classes are labelled by the elements of the second cohomology group of $G$, $H^2(G,\mathbb{C})$. The identity element of this group labels the familiar set of linear irreducible representations.

We also note that there exists a gauge freedom that preserves the equivalence class of $\omega$: If we re-phase $\omega$ as
\begin{equation}
  \label{eq:cocycle_gauge}
  \omega(g_1,g_2) \mapsto \tilde{\omega}(g_1,g_2) =  \omega(g_1,g_2) \frac{\beta(g_1)\beta(g_2)}{\beta(g_1 g_2)},
\end{equation}
 it still satisfies the condition~(\ref{eq:cocycle_condition}) and $\tilde{\omega} \sim \omega$. This means that while the re-phasing transforms each element of the projective representation as $V(g)\mapsto \tilde{V}(g) = \beta(g) V(g)$, the new $\tilde{V}$ belongs to the same class of projective representations as $V$ $i.e.$ $\tilde{V}(g_1).\tilde{V}(g_2) = \tilde{\omega}(g_1,g_2) \tilde{V}(g_1.g_2)$.

\section{SPT phases with spatial-inversion and time-reversal invariance}\label{app:parity+time}
The action of spatial-inversion or parity, $\hat{P}$ can be effected by a combination of an on-site action by some unitary operator $w$ and a reflection, $\hat{R}$ that exchanges lattice sites $n$ and $-n$. 
\begin{equation}
\hat{P} = w_1 \otimes w_2  \cdots \otimes w_N~\hat{R}.
\end{equation}
Since we cannot talk about inversion in disordered systems, we assume the system also has lattice translation invariance. If parity is a symmetry of a wavefunction $\ket{\psi}$, we have
\begin{equation}\label{eq:wavefuction_parity}
\hat{P} \ket{\psi} = \alpha(P)^N \ket{\psi}.
\end{equation}
The condition Eq.~(\ref{eq:wavefuction_parity}) can also be imposed on the level of the MPS matrices that describe $\ket{\psi}$:
\begin{equation}\label{eq:mps_parity}
w_{i j } (A^{j})^T = \alpha(P) N^{-1} A^i N,
\end{equation}
where, $\alpha(P) = \pm 1$ labels parity even or odd and the action of parity on the virtual space $N$ has the property $N^T = \beta(P) N = \pm N$. $\{\alpha(P), \beta(P)\}$ label the 4 distinct phases protected by parity~\cite{top1d_wennew}.

The anti-unitary action of time-reversal $\hat{T}$ on the other hand is effected by a combination of an on-site unitary action acting on the internal spin degrees of freedom, $v$ and a complex conjugation $\hat{K}$: $\hat{T} = v_1 \otimes v_2  \cdots \otimes v_n~\hat{K}$. If this is a symmetry of a wavefunction $\ket{\psi}$, we have
\begin{equation}\label{eq:wavefuction_time}
\hat{T} \ket{\psi} = \ket{\psi}.
\end{equation}
We no longer need to allow an overall phase $\alpha(T)^N$ because of the anti-unitary nature of $\hat{T}$ that allows it to be absorbed into redefining each basis $\ket{i}\rightarrow \sqrt{\alpha(T)} \ket{i}$.
\begin{proof}
	\begin{eqnarray}
	\hat{T} \ket{\psi} &=& \alpha(T)^N \ket{\psi},\\
	\sqrt{\alpha^*(T)^{N}} \hat{T} \ket{\psi} &=& \sqrt{\alpha(T)^N }\ket{\psi},\\
	\hat{T} \sqrt{\alpha(T)^{N}} \ket{\psi} &=& \sqrt{\alpha(T)^N }\ket{\psi},\\
	\hat{T} \ket{\psi'} &=& \ket{\psi'}.
	\end{eqnarray}
\end{proof}

 The condition of Eq.~(\ref{eq:wavefuction_parity}) can also be imposed on the level of the MPS matrices that describes $\ket{\psi}$
\begin{equation}\label{eq:mps_timereversal}
v_{i j } (A^{j})^* =  M^{-1} A^i M,
\end{equation}
where, $M M^* = \beta(T) \mathbb{1} = \pm \mathbb{1}$ and $\beta(T)$ labels the two distinct phases of time-reversal invariant Hamiltonians. 

Finally if we consider systems invariant under both parity and time-reversal, there are 8 distinct phases labelled by $\{ \alpha(P), \beta(P), \beta(T) \}$ as defined before. However, since the action of parity and time-reversal should commute, this imposes constraints on the matrices $M$ and $N$ as
\begin{equation}\label{eq:parity+timereversal}
MN^\dagger M N^\dagger \propto \ e^{i \theta} \mathbb{1}
\end{equation}
We direct the reader to Ref.~\cite{top1d_wennew,top1d_wennew} for details on several results used in this section. 

\section{SPT Phases with combination of on-site symmetry, spatial-inversion and time-reversal invariance}\label{app:parity+time+onsite}
We now look at ground states of SPT phases of gapped Hamiltonians with on-site symmetry combined with parity, time-reversal invariance or both. We find that the `B' matrices of decomposition of Sec.~\ref{sec:on-site} have further constraints in the way described in Sec.~\ref{app:parity+time}.

\subsubsection{On-site symmetry + parity}
Let us consider SPT phases protected by an on-site symmetry $G$ under a representation $u(g)$ combined with parity. If the actions of the two symmetry transformations commute on the physical level,
\begin{equation}
\hat{U} \hat{P} \ket{\psi} = \hat{P} \hat{U} \ket{\psi},
\end{equation}
this imposes constraints on the matrix $N$ defined in Sec.~\ref{app:parity+time} as~\cite{top1d_wennew}.
\begin{equation}\label{eq:on-site+parity_MPS}
N^{-1} V(g) N = \gamma(g) V^*(g).
\end{equation}
Where, $\gamma(g)$ is a one-dimensional irrep of $G$ that arises from the commutation of on-site and parity transformations~\cite{top1d_wennew} and $V(g)$ is the reduced (block-diagonal) representation of $G$ acting on the virtual space as discussed in Sec~\ref{sec:on-site} that contains all the irreps, $V_1 \cdots V_r$ of a certain projective class $\omega$.
\begin{equation} \label{eq:V(g) form}
V(g) = \begin{pmatrix}
\mathbb{1}_1 \otimes V_1(g) & 0 & \dots & 0\\
0 &\mathbb{1}_2 \otimes V_2(g)   & \dots & \vdots\\
\vdots &  & \ddots & \vdots\\
0 & \dots & 0 & \mathbb{1}_r \otimes V_r(g) 
\end{pmatrix}.
\end{equation}
$\mathbb{1}_i$ is the trivial action on the degeneracy space of `B' matrices as defined earlier. Different phases of matter are now labeled by $\{\omega,\chi(g),\alpha(P),\beta(P),\gamma(g) \}$~\cite{top1d_wennew}:
\begin{enumerate}
	\item The different projective classes $\omega \in H^2(G,\mathbb{C})$ which satisfy $\omega^2 = 1$. 
	\item The different one-dimensional irreps $\chi$ of $G$ since the system is translationally invariant.
	\item $\alpha(P)$, the parity of the spin chain
	\item $\beta(P)$ which denotes whether the virtual space parity representation is symmetric or anti-symmetric. 
	\item $\gamma(g) \in \mathcal{G}/\mathcal{G}_2$ where $\mathcal{G}$ labels the group of 1D irreps of $G$ and $\mathcal{G}_2$ labels the group of the square of 1D irreps of $G$. This arises due to the commutation of parity and on-site symmetry transformations in the virtual space. 
\end{enumerate}
Given a set of labels, $\{\omega,\chi(g),\alpha(P),\beta(P),\gamma(g) \}$, we constrain the MPS ground-state wavefunction We observe that the right hand side of Eq.~(\ref{eq:on-site+parity_MPS}) can be written as
\begin{equation}
\gamma(g) V^*(g) = L_\gamma V(g) L_\gamma^{-1},
\end{equation}
where, $L_\gamma$ involves permutation of irrep blocks and possibly a change of basis on the irreps of $V(g)$ and can be obtained by considering the effect of re-phasing each of the complex conjugated irrep blocks $V^*_\alpha(g)$ with $\gamma(g)$. 
\begin{proof}
	To see this, we first note that when $\omega^2 =1$ $i.e.$ $\omega = \omega^*$, $V^*_\alpha(g)$ is a representation that belongs to the same class of projective irreps $\omega$ as $V_\alpha(g)$ as seen by complex conjugating Eq.~(\ref{eq:projective_defn}).  $\gamma(g) V^*_\alpha(g)$ also belongs to the same class because $\gamma(g)$ belongs to the class labelled by the trivial element $e \in \hgc$ and hence $\gamma(g) V^*_\alpha(g)$ belongs to the class $e* \omega = \omega$. To show that $\gamma(g) V^*_\alpha(g)$ is also an irrep, we start with the characters $\chi_\alpha$ of the irrep $V_\alpha$ which satisfy the irrep condition of the group of order $|G|$~\cite{ramond_group}
	\begin{equation}
	\frac{1}{|G|} \sum_{g \in G} \chi(g) \chi^*(g) = 1
	\end{equation}
	The characters of $\gamma(g) V^*_\alpha(g)$, $\bar{\chi}_\alpha = \gamma \chi^*_\alpha$ can also easily be shown to satisfy the same condition
	\begin{multline}
	\frac{1}{|G|} \sum_{g \in G} \bar{\chi}(g) \bar{\chi}^*(g) = \\
	\frac{1}{|G|} \sum_{g \in G} \gamma(g) \chi^*(g) \gamma^*(g) \chi(g)= 1
	\end{multline}
	Thus $\gamma(g) V^*_\alpha(g) \sim V_{p(\alpha )}(g)$ is some other irrep in the class $\omega \in \hgc$. We can check that $V_{p(\alpha )}$ again form the complete set of irreps as we run over $\alpha$. This means that the reduced representation $\gamma(g) V^*(g)$ can be obtained from Eq.~(\ref{eq:V(g) form}) by permuting the irrep blocks and with a change of basis and can be done using a matrix $L_\gamma$.
	\begin{equation}
	\gamma(g) V^*(g) = L_\gamma V(g) L_\gamma^{-1}
	\end{equation}
\end{proof}

Using this, Eq.~(\ref{eq:on-site+parity_MPS}) can be rewritten as
\begin{equation}\label{eq:on-site+parity_schurform}
(N L_\gamma)^{-1} V(g) (N L_\gamma) = V(g).
\end{equation}
Eq.~(\ref{eq:on-site+parity_schurform}) imposes constraints on the matrix $N L_\gamma$ block-wise using Schur's lemma for each irrep block of $V(g)$, 
\begin{equation}
N L_\gamma = \begin{pmatrix}\label{eq:N_blockcondition}
N_1 \otimes \mathbb{1'}_1  & 0 & \dots & 0\\
0 & N_2 \otimes \mathbb{1'}_2  & \dots & \vdots\\
\vdots &  & \ddots & \vdots\\
0 & \dots & 0 & N_r \otimes \mathbb{1'}_r 
\end{pmatrix},
\end{equation}
where $\mathbb{1'}_\alpha$ is the identity matrix in the irrep $V_\alpha$. Moving $L_\gamma$ to the other side of the equation gives the form of $N$. This form can be used in the condition Eq.~(\ref{eq:mps_parity}) which effectively results in conditions of the `B' matrices of $A^i$ of Eq.~(\ref{eq:mps_onsite+trans}) determined from labels $\{ \omega, \chi \}$. So far, we have used the labels $\{ \omega, \chi(g), \gamma(g) \}$ to constrain the MPS matrices. The labels $\alpha(P)$ and $\beta(P)$ determine the form of the blocks $N_\alpha$ and are imposed on the `B' matrices when we use Eq.~(\ref{eq:mps_parity}) and the results of Sec.~\ref{app:parity+time}.

\subsubsection{On-site symmetry + time reversal}
We can repeat the same exercise for time-reversal invariance combined with on-site symmetry $G$. If the actions of the two symmetry transformations commute
\begin{equation}
\hat{U} \hat{T} \ket{\psi} = \hat{T} \hat{U} \ket{\psi},
\end{equation}
We find that the condition on the matrix $M$ that results is identical to Eq.~(\ref{eq:on-site+parity_MPS})~\cite{top1d_wennew}.
\begin{equation}\label{eq:on-site+timereversal_MPS}
M^{-1} V(g) M = \gamma'(g) V^*(g)
\end{equation}
With additional translation invariance, different SPT phases are labelled by $\{\omega,\chi(g),\beta(T),\gamma'(g) \}$~\cite{top1d_wennew} $i.e.$
\begin{enumerate}
	\item The different projective classes $\omega \in H^2(G,\mathbb{C})$ which satisfy $\omega^2 = 1$. 
	\item The different one-dimensional irreps $\chi$ of $G$ which satisfy $\chi^2=1$ if the system is translationally invariant. If not, different $\chi$ all label the same phase.
	\item $\beta(T)$ defined by $M M^* = \beta(T) \mathbb{1}$
	\item $\gamma'(g) \in \mathcal{G}/\mathcal{G}_2$ where $\mathcal{G}$ labels the group of 1D irreps of $G$ and $\mathcal{G}_2$ labels the group of the square of 1D irreps of $G$. This arises due to the commutation of time-reversal and on-site symmetry transformations in the virtual space. 
\end{enumerate}
In the same way as for parity, we can find $L_{\gamma'}$ and the condition on $M$
\begin{equation}\label{eq:M_blockcondition}
M L_{\gamma'} = \begin{pmatrix}
M_1 \otimes \mathbb{1'}_1  & 0 & \dots & 0\\
0 & M_2 \otimes \mathbb{1'}_2 & \dots & \vdots\\
\vdots &  & \ddots & \vdots\\
0 & \dots & 0 & M_r \otimes \mathbb{1'}_r 
\end{pmatrix}
\end{equation}
Moving $L_{\gamma'}$ to the right hand side, we get the form of $M$ and can use this in Eq.~(\ref{eq:mps_timereversal}) to constrain the `B' matrices of $A^i$ in Eq.~(\ref{eq:mps_onsite+trans}) employing labels $\{ \omega, \chi(g), \gamma'(g) \}$ thus far. The label $\beta(T)$ determines the form of the blocks $M_\alpha$ and is imposed on the `B' matrices when we use Eq.~(\ref{eq:mps_timereversal}) and the results of Sec.~\ref{app:parity+time}.

\subsubsection{On-site symmetry + parity + time reversal}
Finally, we consider the combined action of on-site symmetry, spatial-inversion and time-reversal invariance. The distinct SPT phases are labelled by $\{\omega, \chi(g), \alpha(P), \beta(P),  \beta(T),\gamma(g), \gamma'(g) \}$ where all labels are defined as before with additional conditions $\omega^2 = 1$ and $\chi^2 = 1$~\cite{top1d_wennew}. To write down the MPS form for the ground state of a phase labelled by these labels, we repeat the same procedure as we did before and obtain the forms of $L_\gamma$ and $L_{\gamma'}$. Using this, we constrain the block form of $M$, $N$ using Eqs.~(\ref{eq:N_blockcondition},\ref{eq:M_blockcondition}). The blocks of $M$ and $N$ encode the information about $\{\alpha(P),\beta(P),\beta(T)\}$ and are used to constrain the `B' matrix blocks of $A^i$ in Eq.~(\ref{eq:mps_onsite+trans}) using Eqs.~(\ref{eq:mps_parity},\ref{eq:mps_timereversal}).

We summarize this section with steps used to constrain ground states of SPT phases of Hamiltonians invariant under combinations of on-site symmetry with parity and/or time reversal:
\begin{enumerate}
	\item The different SPT phases are labelled by a subset of the following labels  $\{\omega, \chi(g), \alpha(P), \beta(P),  \beta(T),\gamma(g), \gamma'(g) \}$ with $\omega^2 = 1$ and $\chi^2 = 1$.
	\item Impose the labels from on-site symmetry $i.e.$ $\{\omega, \chi(g)\}$ using the steps of Sec~\ref{sec:on-site+trans}).
	\item Impose the label $\gamma$ ($\gamma'$) from parity (time-reversal) symmetry by constructing $L_\gamma$ ($L_{\gamma'}$) and thus constraining the matrices $N$ ($M$) to a block form using Eqs.~(\ref{eq:N_blockcondition},\ref{eq:M_blockcondition})). 
	\item Impose labels $\{\alpha(P),\beta(P),\beta(T)\}$ by restricting the form of the blocks of $N$, $M$ appropriately and then using Eqs.~(\ref{eq:mps_parity},\ref{eq:mps_timereversal}).
\end{enumerate} 
We remark that while we can use $L_\gamma, L_{\gamma'}$ to determine the block form of $M$ and $N$, constraining the individual blocks themselves is not straightforward and we do not investigate a way to do it in this paper.

\section{Obtaining the Clebsch-Gordan coefficients}
\label{app:sakata}
We now review a method to obtain the CG matrices corresponding to finite group irrep decompositions of a certain kind. We follow the technique developed in Ref.~\cite{sakata}. 
Essentially what is needed are the two theorems presented below.
\begin{theorem}\label{th:eqvt}
Consider a finite group $G$ and a certain irrep $D(r)$, $r \in G$. If $D'(r)$ is an equivalent irrep $i.e.$ $D'(g) = U D(g) U^{\dagger}$ then $\sum_{r \in G} D'(r) A D^{\dagger}(r) = \lambda U  $ where $A$ is an arbitrary matrix which is of the same size as $D$ and $\lambda$ is a constant which is a function of the elements of $A$
\end{theorem}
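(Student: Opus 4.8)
The plan is to recognize this as the group-averaging (``projection operator'') construction followed by an application of Schur's lemma. Without loss of generality I take the irreps $D$ and $D'$ to be unitary, so that $\dagr{D}(r) = D(r)^{-1} = D(r^{-1})$; any representation of a finite group can be unitarized by the standard averaging of an inner product, and the intertwiner $U$ relating the two unitarized copies can then be chosen unitary, consistent with the appearance of $\dagr{U}$ in the statement.

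First I would set $M \equiv \sum_{r\in G} D'(r)\, A\, \dagr{D}(r)$ and establish the intertwining relation $D'(s)\, M = M\, D(s)$ for every $s\in G$. This is just a relabelling of the summation variable: since $r\mapsto sr$ is a bijection of $G$,
\begin{equation}
D'(s)\, M\, \dagr{D}(s) = \sum_{r\in G} D'(sr)\, A\, \dagr{D}(sr) = \sum_{r'\in G} D'(r')\, A\, \dagr{D}(r') = M ,
\end{equation}
and multiplying on the right by $D(s)$ (which equals $\dagr{D}(s)^{-1}$ by unitarity) gives the claim.

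Next I would substitute $D'(s) = U D(s) \dagr{U}$ into $D'(s) M = M D(s)$, which rearranges to $D(s)\,(\dagr{U} M) = (\dagr{U} M)\, D(s)$ for all $s\in G$; that is, $\dagr{U} M$ commutes with the entire irreducible representation $D$. Working over $\mathbb{C}$, Schur's lemma forces $\dagr{U} M = \lambda\, \mathbb{1}$ for some scalar $\lambda$, hence $M = \lambda U$, which is the asserted identity. Taking the trace and using unitarity of $U$ yields the explicit value $\lambda = \frac{1}{\dim D}\,\mathrm{Tr}(\dagr{U} M) = \frac{1}{\dim D}\sum_{r\in G}\mathrm{Tr}\!\big(\dagr{U} D'(r)\, A\, \dagr{D}(r)\big)$, which is manifestly a linear functional of the entries of $A$ (and may vanish for particular choices of $A$).

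There is no substantial obstacle: the only points needing care are (i) justifying the reduction to unitary representations so that $\dagr{D}(r) = D(r^{-1})$ and the index shift is clean, and (ii) noting that irreducibility of $D$ over $\mathbb{C}$ is precisely the hypothesis Schur's lemma requires to conclude $\dagr{U}M$ is a scalar multiple of the identity. Both are standard. (The companion statement, in which $D$ and $D'$ are \emph{inequivalent} irreps, follows from the same reindexing: $M$ then intertwines inequivalent irreps and Schur's lemma forces $M=0$.)
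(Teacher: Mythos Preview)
Your proof is correct and follows essentially the same route as the paper: group-average to produce an intertwiner, then invoke Schur's lemma. The only organisational difference is that the paper first isolates the special case $D'=D$ as a lemma ($\sum_{r} D(r)\,B\,\dagr{D}(r)=\lambda\,\mathbb{1}$) and then substitutes $B=\dagr{U}A$ and left-multiplies by $U$, whereas you work directly with the mixed sum $\sum_r D'(r)\,A\,\dagr{D}(r)$, show it intertwines $D'$ and $D$, and then compose with $\dagr{U}$ to reduce to a self-intertwiner of $D$; the underlying content is identical.
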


To prove Theorem 1, we need the following two lemmas.
%\begin{proof}
  \begin{lemma}
   M = $ \sum_{r \in G} D(r) B \dagr{D(r)} \propto \mathbb{1} $ where $B$ is an arbitrary matrix of the same size as $D$.
  \end{lemma}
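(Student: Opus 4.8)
The plan is to prove this standard Schur-type averaging lemma by exploiting the invariance of the group sum under translation by any group element. First I would define $M = \sum_{r \in G} D(r) B \, D(r)^{\dagger}$ and observe that for any fixed $s \in G$, we have $D(s) M D(s)^{\dagger} = \sum_{r \in G} D(s) D(r) B \, D(r)^{\dagger} D(s)^{\dagger} = \sum_{r \in G} D(sr) B \, D(sr)^{\dagger}$. Since $D$ is a representation, $D(s)D(r) = D(sr)$ (up to the cocycle phases in the projective case, but here $D$ is a genuine linear irrep, and the phases cancel between $D(sr)$ and $D(sr)^{\dagger}$ anyway). As $r$ ranges over $G$, so does $sr$, so the sum is unchanged: $D(s) M D(s)^{\dagger} = M$, i.e. $D(s) M = M D(s)$ for all $s \in G$.

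Next I would invoke Schur's lemma: since $M$ commutes with every element of the irreducible representation $D$, and we are working over $\mathbb{C}$ (algebraically closed), $M$ must be a scalar multiple of the identity, $M = \lambda \mathbb{1}$. This gives exactly the claimed statement $M \propto \mathbb{1}$. For completeness one could note that $\lambda$ is recovered by taking the trace: $\lambda = \frac{1}{\dim D}\, \mathrm{Tr}\, M = \frac{|G|}{\dim D}\,\mathrm{Tr}(B)$ using cyclicity of the trace and $D(r)^{\dagger}D(r) = \mathbb{1}$ for unitary $D$ (one may assume $D$ unitary without loss of generality, since every finite-group representation is equivalent to a unitary one).

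The argument is essentially routine; there is no serious obstacle. The only point requiring a little care is the invariance step — making sure the reindexing $r \mapsto sr$ is a bijection of $G$ (it is, being left multiplication by a fixed group element) and that no stray phase survives in the projective case (it does not, because conjugation by $D(s)$ pairs $D(sr)$ with its conjugate-transpose and the magnitude-one cocycle factor $\omega(s,r)$ cancels against $\omega(s,r)^{*}$). After that, Schur's lemma over $\mathbb{C}$ closes the proof immediately, and this lemma then feeds directly into the proof of Theorem~\ref{th:eqvt} by applying it with $B \mapsto A U^{\dagger}$ or a similar substitution relating the two equivalent irreps.
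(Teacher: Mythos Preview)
Your proof is correct and follows essentially the same route as the paper: establish that $M$ commutes with every $D(g)$ by reindexing the group sum via left translation, then invoke Schur's lemma to conclude $M\propto\mathbb{1}$. The only cosmetic difference is that the paper left-multiplies by $D(g)$ and shows $D(g)M=MD(g)$ directly, whereas you conjugate by $D(s)$; your additional remarks on the explicit value of $\lambda$ and the projective-phase cancellation are correct but go slightly beyond what the paper records.
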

   \begin{proof}
     \begin{multline}
       D(g) M = D(g) \sum_{r \in G} D(r) B \dagr{D(r)} \\ =  \sum_{r \in G} D(g)D(r) B \dagr{D(r)}  
       =\sum_{r \in G} D(g r) B \dagr{D(r)}\\ = \sum_{g r \in G} D(gr) B \dagr{D(gr)} D(g) = M D(g),
     \end{multline}
     \begin{equation}
       \implies [M,D(g)] = 0~ \forall g \in G.
     \end{equation}
     From Schur's second lemma, we get $M \propto \mathbb{1}$ 
   \end{proof}

\begin{lemma}
  If $D^\alpha (g)$ and $D^\beta (g)$ are two inequivalent irreps, $M' = \sum_{r \in G} D^\alpha (r) B \dagr{D^\beta (r)} = 0$
\end{lemma}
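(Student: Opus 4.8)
The plan is to run exactly the argument used for Lemma~1, but now exploiting that $D^\alpha$ and $D^\beta$ are \emph{inequivalent}: I will show that $M'$ intertwines the two representations and then invoke the off-diagonal half of Schur's lemma (a nonzero intertwiner between irreps is an isomorphism, so between inequivalent irreps it must vanish). As in the proof of Lemma~1, I take the irreps to be unitary, as one may always do for a finite group, so that $\dagr{D^\beta(r)} = D^\beta(r^{-1})$ and, more generally, $\dagr{D^\beta(g^{-1}s)} = \dagr{D^\beta(s)}\, D^\beta(g)$.

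First I would compute $D^\alpha(g) M'$ for an arbitrary $g \in G$. Pulling $D^\alpha(g)$ through the sum and using the homomorphism property gives $\sum_{r \in G} D^\alpha(gr) B \dagr{D^\beta(r)}$; relabelling the summation variable by $s = gr$, which is a bijection of $G$, and using the unitarity identity above, one finds
\begin{equation}
D^\alpha(g) M' = \sum_{s \in G} D^\alpha(s) B \dagr{D^\beta(s)}\, D^\beta(g) = M'\, D^\beta(g).
\end{equation}
Thus $D^\alpha(g) M' = M' D^\beta(g)$ for every $g \in G$, i.e.\ $M'$, viewed as a linear map from the carrier space of $D^\beta$ to that of $D^\alpha$, is a morphism of $G$-representations.

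Since $D^\alpha$ and $D^\beta$ are inequivalent irreducible representations, Schur's lemma forces this morphism to be zero, so $M' = 0$, which is the claim. There is no genuine obstacle here: the only point requiring care is the bookkeeping of the index relabelling together with the daggers versus inverses, and this becomes mechanical once unitarity is used to identify $\dagr{D^\beta(\cdot)}$ with $D^\beta(\cdot^{-1})$. This lemma is simply the off-diagonal companion of Lemma~1 (which treated the case $D^\alpha = D^\beta$), and the two lemmas together feed into the proof of Theorem~\ref{th:eqvt}.
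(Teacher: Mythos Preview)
Your proof is correct and follows essentially the same approach as the paper: you establish the intertwining relation $D^\alpha(g) M' = M' D^\beta(g)$ by the same reindexing argument as in Lemma~1 and then apply Schur's lemma for inequivalent irreps. The paper's proof is just a two-line sketch (``Using the same arguments as before\dots''), and you have simply spelled out those arguments in full, including the unitarity assumption that the paper leaves implicit.
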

\begin{proof}
  Using the same arguments as before, we get $D^\alpha (g) M' = M' D^\beta (g)$. From Schur's first lemma we get $M' = 0$
\end{proof}

To prove theorem \ref{th:eqvt}, let us start with
   \begin{equation}
 \sum_{r \in G} D(r) B \dagr{D(r)} =  \lambda \mathbb{1}.      
   \end{equation}
Then take $B = \dagr{U} A $, we have
\begin{multline}
   \sum_{r \in G} D(r) \dagr{U} A \dagr{D(r)} =  \lambda \mathbb{1} \\ \implies  \sum_{r \in G}U D(r) \dagr{U} A \dagr{D(r)}  =  \lambda U \\ \implies \sum_{r \in G} D'(r) A \dagr{D(r)} =  \lambda U.
\end{multline}
%\end{proof}

\begin{theorem} \label{th:ineqvt}
  Let $D^\alpha(g)$ and $D^\beta(g)$ be two irreps of $G$. Let $D'(g) = D^\alpha(g) \otimes D^\beta(g)$ be the direct product representation of irreps whose CG decomposition is multiplicity free $i.e.$ $\alpha \otimes \beta=\oplus_\gamma n_{\alpha \beta}^\gamma \gamma$ has all $n_{\alpha \beta}^\gamma \le 1$. Let $D(g)$ be the completely reduced representation which is block diagonal containing all irreps in the decomposition of $\alpha \otimes \beta$ labelled  $\gamma = 1 \ldots m$.
  \begin{equation}
D(g)=
    \begin{pmatrix}
      D_{1}(g) & 0 & \dots & 0\\
      0 & D_{2}(g) & \dots & \vdots\\
      \vdots &  & \ddots & \vdots\\
      0 & \dots & 0 & D_{m}(g) 
    \end{pmatrix}.
  \end{equation}
If $U$ consists of the CG matrices such that $D'(r) = U D(r) \dagr{U}$, organized according to the irrep sizes,
\begin{equation}
U=
  \begin{pmatrix}
    U_{11} & U_{1 2} & \dots & U_{1 m} \\
    U_{21} & U_{2 2} & \dots & U_{2 m} \\
    \vdots & \ddots & & \vdots\\
    \vdots & & \ddots & \vdots\\
    U_{m1} & U_{m 2} & \dots & U_{m m} \\
  \end{pmatrix},
\end{equation}
then
\begin{equation}
  \sum_{r \in G} D'(r) A \dagr{D(r)} =
  \begin{pmatrix}
    \lambda_1 U_{11} & \lambda_2 U_{1 2} & \dots & \lambda_m U_{1 m} \\
    \lambda_1 U_{21} & \lambda_2 U_{2 2} & \dots & \lambda_m U_{2 m} \\
    \vdots & \ddots & & \vdots\\
    \vdots & & \ddots & \vdots\\
    \lambda_1 U_{m1} & \lambda_2 U_{m 2} & \dots & \lambda_m U_{m m}     
  \end{pmatrix}.
\end{equation}
\end{theorem}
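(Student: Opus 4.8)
\medskip\noindent\textbf{Proof proposal.}
The plan is to reduce Theorem~\ref{th:ineqvt} to the two Schur-averaging lemmas above, applied block by block. I would start from the defining intertwining relation $D'(r) = U D(r) \dagr{U}$ and use it to strip the $r$-independent matrices out of the group average:
\[
\sum_{r\in G} D'(r)\, A\, \dagr{D(r)} = \sum_{r\in G} U D(r) \dagr{U} A\, \dagr{D(r)} = U \Big( \sum_{r\in G} D(r)\, B\, \dagr{D(r)} \Big),
\]
where $B \equiv \dagr{U} A$ is again an arbitrary matrix of the right size. Thus the whole computation reduces to evaluating $S \equiv \sum_{r\in G} D(r)\, B\, \dagr{D(r)}$ and then left-multiplying the result by $U$. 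Note that only the relation $D'(r) = U D(r)\dagr{U}$ is used here, so no property of the CG matrix beyond its being an intertwiner is needed for this step.

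Next I would use that $D(r)$ is block diagonal with blocks $D_1(r),\dots,D_m(r)$, which — by the multiplicity-free hypothesis — are \emph{pairwise inequivalent} irreps, since no irrep label is repeated. Writing $B$ in the conformal block form $B = (B_{ij})$, the $(i,j)$ block of $S$ is $\sum_{r} D_i(r)\, B_{ij}\, \dagr{D_j(r)}$. Lemma~2 makes every off-diagonal block vanish, and Lemma~1 forces each diagonal block to be a scalar multiple of the identity, $\sum_r D_i(r) B_{ii} \dagr{D_i(r)} = \lambda_i \mathbb{1}$, with $\lambda_i$ a constant depending only on the single block $B_{ii}$. Hence $S$ is the block-scalar matrix $\mathrm{diag}(\lambda_1 \mathbb{1}, \dots, \lambda_m \mathbb{1})$.

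Finally, restoring the factor $U$: multiplying $U$ on the right by $\mathrm{diag}(\lambda_1\mathbb{1},\dots,\lambda_m\mathbb{1})$ rescales the $j$-th block \emph{column} of $U$ by $\lambda_j$, i.e. $U_{ij} \mapsto \lambda_j U_{ij}$ uniformly in $i$, which is exactly the matrix claimed in the statement. The only point requiring care — and the closest thing to an obstacle — is invoking the multiplicity-free assumption precisely where it is needed: it is what guarantees the diagonal blocks of $D$ are mutually inequivalent, so that Lemma~2 applies to \emph{all} off-diagonal blocks. If some irrep appeared with multiplicity greater than $1$, two diagonal blocks would be equivalent, Lemma~1 (Schur's second lemma) would instead produce nonzero off-diagonal intertwiners, and the clean column-scaling structure would break down, being replaced by a finer statement with sub-blocks indexed by multiplicity.
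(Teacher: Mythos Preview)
Your proposal is correct and follows essentially the same route as the paper: substitute $B=\dagr{U}A$ to reduce $\sum_r D'(r)A\dagr{D(r)}$ to $U\sum_r D(r)B\dagr{D(r)}$, evaluate the latter block by block via Lemmas~1 and~2 to get $\mathrm{diag}(\lambda_1\mathbb{1},\dots,\lambda_m\mathbb{1})$, and then left-multiply by $U$. If anything, you are slightly more explicit than the paper in isolating exactly where the multiplicity-free hypothesis enters (namely, to ensure the diagonal blocks are pairwise inequivalent so Lemma~2 kills all off-diagonal blocks).
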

We need the following Lemma to prove Theorem 2.
%\begin{proof}
  \begin{lemma}
    \begin{equation}
    \sum_{r \in G} D(r) B \dagr{D(r)}=
    \begin{pmatrix}
      \lambda_1 \mathbb{1}_{1} & 0 & \dots & 0\\
      0 &  \lambda_2 \mathbb{1}_{2} & \dots &0\\
      \vdots & \ddots & & \vdots\\
      \vdots & & \ddots  & \vdots\\\
      0 & \dots & &\lambda_m \mathbb{1}_{m }
    \end{pmatrix}.
    \end{equation}
  \end{lemma}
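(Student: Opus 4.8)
The plan is to derive this block-diagonal structure directly from the two Schur-type lemmas already established above, applied block by block. First I would partition $B$ into an $m\times m$ array of blocks $B=(B_{ij})$, where the rows of $B_{ij}$ are indexed by the carrier space of the irrep $D_i$ and the columns by that of $D_j$, so that $B_{ij}$ has the size $\dim D_i\times\dim D_j$. Because $D(r)$ is block diagonal, the product $D(r)B\dagr{D(r)}$ is again an $m\times m$ array of blocks whose $(i,j)$ entry is precisely $\sum_{r\in G} D_i(r)\,B_{ij}\,\dagr{D_j(r)}$. The whole computation therefore decouples into $m^2$ independent sums of exactly the type already analyzed.

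Next I would invoke the multiplicity-free hypothesis on $\alpha\otimes\beta=\oplus_\gamma n^\gamma_{\alpha\beta}\gamma$ with all $n^\gamma_{\alpha\beta}\le 1$: the blocks $D_1,\dots,D_m$ are pairwise inequivalent irreps of $G$. For $i\neq j$ the Lemma on inequivalent irreps (giving $\sum_{r} D^\alpha(r)B\dagr{D^\beta(r)}=0$) forces $\sum_{r} D_i(r)\,B_{ij}\,\dagr{D_j(r)}=0$, so every off-diagonal block vanishes. For $i=j$ the earlier Lemma $\sum_{r} D(r)B\dagr{D(r)}\propto\mathbb{1}$, applied with the single irrep $D_i$ and arbitrary matrix $B_{ii}$, gives $\sum_{r} D_i(r)\,B_{ii}\,\dagr{D_i(r)}=\lambda_i\,\mathbb{1}_i$ for some scalar $\lambda_i$ that depends on the entries of $B_{ii}$. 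Reassembling the blocks produces exactly the claimed diagonal matrix with entries $\lambda_1\mathbb{1}_1,\dots,\lambda_m\mathbb{1}_m$.

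I do not expect a genuine obstacle here; the only things to keep straight are the block bookkeeping (matching the index ranges of $B_{ij}$ to the dimensions of $D_i$ and $D_j$) and the observation that each $\lambda_i$ is really a constant, which is exactly what Schur's second lemma supplies inside the cited Lemma. It is worth noting that this Lemma is the direct-sum generalization of the first Lemma, and it is the computational heart of Theorem~\ref{th:ineqvt}: substituting $B=\dagr{U}A$ and conjugating by the change-of-basis matrix $U$ with $D'(r)=U D(r)\dagr{U}$ converts this block-diagonal identity into the column-rescaled form of the CG matrices asserted there.
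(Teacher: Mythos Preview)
Your proof is correct and follows essentially the same approach as the paper: partition $B$ into blocks compatible with the block-diagonal structure of $D(r)$, then apply Lemma~1 to the diagonal blocks and Lemma~2 to the off-diagonal blocks (using the multiplicity-free hypothesis to ensure the $D_i$ are pairwise inequivalent). Your added remark connecting this to Theorem~\ref{th:ineqvt} via the substitution $B=\dagr{U}A$ is also exactly how the paper proceeds.
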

  \begin{proof}
    \begin{multline}
      \sum_{r \in G} D(r) B \dagr{D(r)}= \\ \sum_{r } 
     \begin{pmatrix}
         D_{1}(r) B_{11} \dagr{D_{1}(r)} &  \dots &  D_{1}(r) B_{1m} \dagr{D_{m}(r)}\\
        \vdots & \ddots &  \vdots\\
         D_{m}(r) B_{m1} \dagr{D_{1}(r)}  & \dots &  D_{m}(r) B_{mm} \dagr{D_{m}(r)}
      \end{pmatrix}.
    \end{multline}
Using the results of the last two Lemmas, we get 
    \begin{equation}
    \sum_{r \in G} D(r) B \dagr{D(r)}=
    \begin{pmatrix}
      \lambda_1 \mathbb{1}_{1} & 0 & \dots & 0\\
      0 &  \lambda_2 \mathbb{1}_{2} & \dots &0\\
      \vdots & \ddots & & \vdots\\
      \vdots & & \ddots  & \vdots\\\
      0 & \dots & &\lambda_m \mathbb{1}_{m}
    \end{pmatrix}.
    \end{equation}
  \end{proof}
To prove Theorem 2, we once again take $B = \dagr{U} A$,  and thus
\begin{multline}
\sum_{r \in G} U D(r) B \dagr{D(r)}=  \sum_{r \in G} D'(r) A \dagr{D(r)} =\\
  \begin{pmatrix}
    \lambda_1 U_{11} & \lambda_2 U_{1 2} & \dots & \lambda_m U_{1 m} \\
    \lambda_1 U_{21} & \lambda_2 U_{2 2} & \dots & \lambda_m U_{2 m} \\
    \vdots & \ddots & & \vdots\\
    \vdots & & \ddots & \vdots\\
    \lambda_1 U_{m1} & \lambda_2 U_{m 2} & \dots & \lambda_m U_{m m}     
  \end{pmatrix}.
\end{multline}
%\end{proof}
Thus, normalizing $\sum_{r \in G} D'(r) A \dagr{D(r)}$ appropriately gives us all the required CG matrices up to multiplication by a complex number. This ambiguity gets absorbed into the `B' matrices when we use the CG coefficients to write down MPS matrices.

We note that for the groups $\ztzt$, $D_4$ and $A_4$, when we take a direct product of the irreps of the physical spin with any projective irrep, we get a multiplicity-free CG decomposition for which we can use the technique mentioned above to obtain CG coefficients. However, for the case of $S_4$, the irrep of the physical spin $3_{(1)}$ has the following decomposition when we take the direct product with the projective irrep $\tilde{4}:$ $3_{(1)}\otimes\tilde{4} = \tilde{2}_{(0)} \oplus\ \tilde{2}_{(1)} \oplus \tilde{4} \oplus \tilde{4}$. Clearly, $\tilde{4}$ has multiplicity 2 in the decomposition. In this case, if we apply the procedure above nonetheless, we get the following:
\begin{multline}
 \sum_{r \in G} D'(r) A \dagr{D(r)}  = \\ \begin{pmatrix}
\begin{pmatrix}
 ~\\
 \lambda_1 C_{3_{(1)} \tilde{4}}^{\tilde{2}_{(0)}}\\
 ~
 \end{pmatrix} &
 \begin{pmatrix}
  ~\\
  \lambda_2 C_{3_{(1)} \tilde{4}}^{\tilde{2}_{(1)}}\\
  ~
  \end{pmatrix}   &
  \begin{pmatrix}
   \lambda_3 C_{3_{(1)} \tilde{4}}^{\tilde{4};1}\\
    +\\
  \lambda_4 C_{3_{(1)} \tilde{4}}^{\tilde{4};2}
  \end{pmatrix} &
  \begin{pmatrix}
  \mu_3 C_{3_{(1)} \tilde{4}}^{\tilde{4};1}\\
  +\\
  \mu_4 C_{3_{(1)} \tilde{4}}^{\tilde{4};2}
  \end{pmatrix}  
 	\end{pmatrix}
\end{multline}
Where $D'(g) = D_{3_{(1)}} \otimes D_{\tilde{4}}$, $D(g) = D_{\tilde{2}_{(0)}} \oplus D_{\tilde{2}_{(1)}} \oplus D_{\tilde{4}} \oplus D_{\tilde{4}}$ and the $C_{3_{(1)} \tilde{4}}^{\tilde{2}_{(1)}}$ etc represent blocks of CG coefficients with the $m$ labels suppressed. 

We can see that $C_{3_{(1)} \tilde{4}}^{\tilde{4};1}$ and $C_{3_{(1)} \tilde{4}}^{\tilde{4};2}$ cannot in principle be separated which is why the method fails for decompositions with irrep multiplicities. However, in our case, it so happens that because of a convenient block structure we can separate the matrices by hand and obtain all CG coefficients.

\end{document}